\newcommand{\tup}{\textup}
\newtheorem{theorem}{Theorem}
\newtheorem{definition}[theorem]{Definition}
\newtheorem{corollary}[theorem]{Corollary}
\newtheorem{assumption}[theorem]{Assumption}
\newtheorem{proposition}[theorem]{Proposition}
\newtheorem{remark}[theorem]{Remark}
\newcommand{\I}{\mathcal{I}_d}
\newcommand{\TF}{\mathcal{F}}
\newcommand{\op}{{\mathcal H}}
\def\field#1{\mathbb #1}%
\def\R{\field{R}}%
\def\N{\field{N}}%
\newcommand{\SF}{\mathcal{S}}
\newcommand{\Let}{:=}
\newcommand{\KK}{\mathcal{K}_{\infty}}
\newcommand{\intcc}[1]{\ensuremath{{\left[#1\right]}}}
\long\def\@maketablecaption#1#2{\@tablecaptionsize
	\global \@minipagefalse
	\hbox to \hsize{\parbox[t]{\hsize}{\centering #1 \\ #2}}}
\begin{document}
	
	\begin{abstract}
	The security in information-flow has become a major concern for cyber-physical systems (CPSs). In this work, we focus on the analysis of an information-flow security property, called \emph{opacity}. Opacity characterizes the plausible deniability of a system's secret in the presence of a malicious outside intruder. We propose a methodology of checking a notion of opacity, called approximate initial-state opacity, for networks of discrete-time switched systems. Our framework relies on compositional constructions of finite abstractions for networks of switched systems and their so-called approximate initial-state opacity-preserving simulation functions (InitSOPSFs). Those functions characterize how close concrete networks and their finite abstractions are in terms of the satisfaction of approximate initial-state opacity. We show that such InitSOPSFs can be obtained compositionally by assuming some small-gain type conditions and composing so-called local InitSOPSFs constructed for each subsystem separately. Additionally, assuming certain stability property of switched systems, we also provide a technique on constructing their finite abstractions together with the corresponding local InitSOPSFs. Finally, we illustrate the effectiveness of our results through an example.
	\end{abstract}
	
	\title[Verification of Initial-State Opacity for Switched Systems: A Compositional Approach]{Verification of Initial-State Opacity for Switched Systems: A Compositional Approach}

	\author{Siyuan Liu$^1$}
	\author{Abdalla Swikir$^2$}
	\author{Majid Zamani$^{3,4}$}
	\address{$^1$Electrical and Computer Engineering Department, Technical University of Munich, Germany.}
	\email{sy.liu@tum.de}
	\address{$^2$Electrical and Computer Engineering Department, Technical University of Munich, Germany.}
	\email{abdalla.swikir@tum.de}
	\address{$^3$Computer Science Department, University of Colorado Boulder, USA.}
	\email{majid.zamani@colorado.edu}
	\address{$^4$Computer Science Department, Ludwig Maximilian University of Munich, Germany.}
	\maketitle
	
	\section{Introduction}
Cyber-physical systems (CPSs) are complex systems resulting from intricate interaction between embedded cyber devices and physical plants. In recent decade, CPSs have become ubiquitous in critical infrastructures and industrial control systems, including power plants, medical devices and smart communities \cite{cardenas2009challenges}. While the increased interaction between cyber and physical components increases systems' functionalities, it also exposes CPSs to more vulnerabilities and security challenges. Recently, the world has witnessed numerous cyber-attacks which have led to great losses in people's livelihoods \cite{ashibani2017cyber}. 
Therefore, ensuring the security of CPSs has become significantly more important.  

In this work, we focus on an information-flow security property, called \emph{opacity}, which characterizes the ability that a system forbids leaking its secret information to a malicious intruder outside the system. Opacity was first introduced in \cite{mazare2004using} to analyze cryptographic protocols. Later, opacity was widely studied in the domain of Discrete Event Systems (DESs), see \cite{lafortune2018history} and the references therein. In this context, existing works on the analysis of various notions of opacity mostly apply to systems modeled by finite state automata, which are more suitable for the cyber-layers of CPSs. However, for the physical components, system dynamics are in general hybrid with uncountable number of states. 

\subsection{Related Works}There have been some recent attempts to extend the notion of opacity to continuous-space dynamical systems \cite{ramasubramanian2019notions,zhang2019opacity, yin2019approximate, liu2020stochastic}. In \cite{ramasubramanian2019notions}, a framework for opacity was introduced for the class of discrete-time linear systems, where the notion of opacity was formulated as an output reachability property rather than an information-flow one. 
The results in \cite{zhang2019opacity} presented a formulation of opacity-preserving (bi)simulation relations between transition systems, which allows one to verify opacity of an infinite-state transition system by leveraging its associated finite quotient one. However, the notion of opacity proposed in this work assumes that the outputs of systems are symbols and are exactly distinguishable from each other, thus, is only suitable for systems with purely logical output sets. In a more recent paper \cite{yin2019approximate}, a new notion of \emph{approximate opacity} was proposed to accommodate imperfect measurement precision of intruders. Based on this, the authors proposed a notion of so-called approximate opacity-preserving simulation relation to capture the closeness between continuous-space systems and their finite abstractions (a.k.a symbolic models) in terms of preservation of approximate opacity. 
The recent results in \cite{liu2020stochastic} investigated opacity for discrete-time stochastic control systems using a notion of so-called initial-state opacity-preserving stochastic simulation functions between stochastic control systems and their finite abstractions (a.k.a. finite Markov decision processes). 

Although the results in \cite{zhang2019opacity,yin2019approximate,liu2020stochastic} look promising, the computational complexity of the construction of finite abstractions in those works grows exponentially with respect to the dimension of the state set, and, hence, those existing approaches will become computationally intractable when dealing with large-scale systems.

Motivated by those abstraction-based techniques in \cite{zhang2019opacity,yin2019approximate,liu2020stochastic} and their limitations, this work proposes an approach to analyze approximate initial-state opacity for networks of switched systems by constructing their opacity-preserving finite abstractions compositionally. There have been some recent results proposing compositional techniques for constructing finite abstractions for networks of systems (see the results in \cite{meyer,7403879,SWIKIR2019,Mallik19,swikir2019compositional} for more details). However, the aforementioned compositional schemes are proposed for the sake of controller synthesis for temporal logic properties, and none of them are applicable to deal with security properties including opacity. 

\subsection{Contributions}In this paper, we provide for the first time a compositional approach to analyze approximate initial-state opacity of a network of switched systems using their finite abstractions. 
A new notion of so-called approximate initial-state opacity-preserving simulation function (InitSOPSF) is introduced as a system relation to characterize the closeness between two networks in terms of preservation of approximate initial-state opacity. We show that such an InitSOPSF can be established by composing certain local InitSOPSFs which relate each switched subsystem to its local finite abstraction. Moreover, under some assumptions ensuring incremental input-to-state stability of discrete-time switched systems, an approach is provided to construct local finite abstractions along with the corresponding local InitSOPSFs for all of the subsystems. 
Then, we derive some small-gain type conditions, under which one can construct a finite abstraction of the concrete network of switched systems by interconnecting local finite abstractions of subsystems. Finally, one can leverage the constructed finite abstraction of the network to check its opacity. The proposed compositional abstraction-based opacity verification pipeline is depicted in Figure~\ref{pipeline}.
\begin{figure}[ht!]
	\centering
	\includegraphics[width=0.8\textwidth]{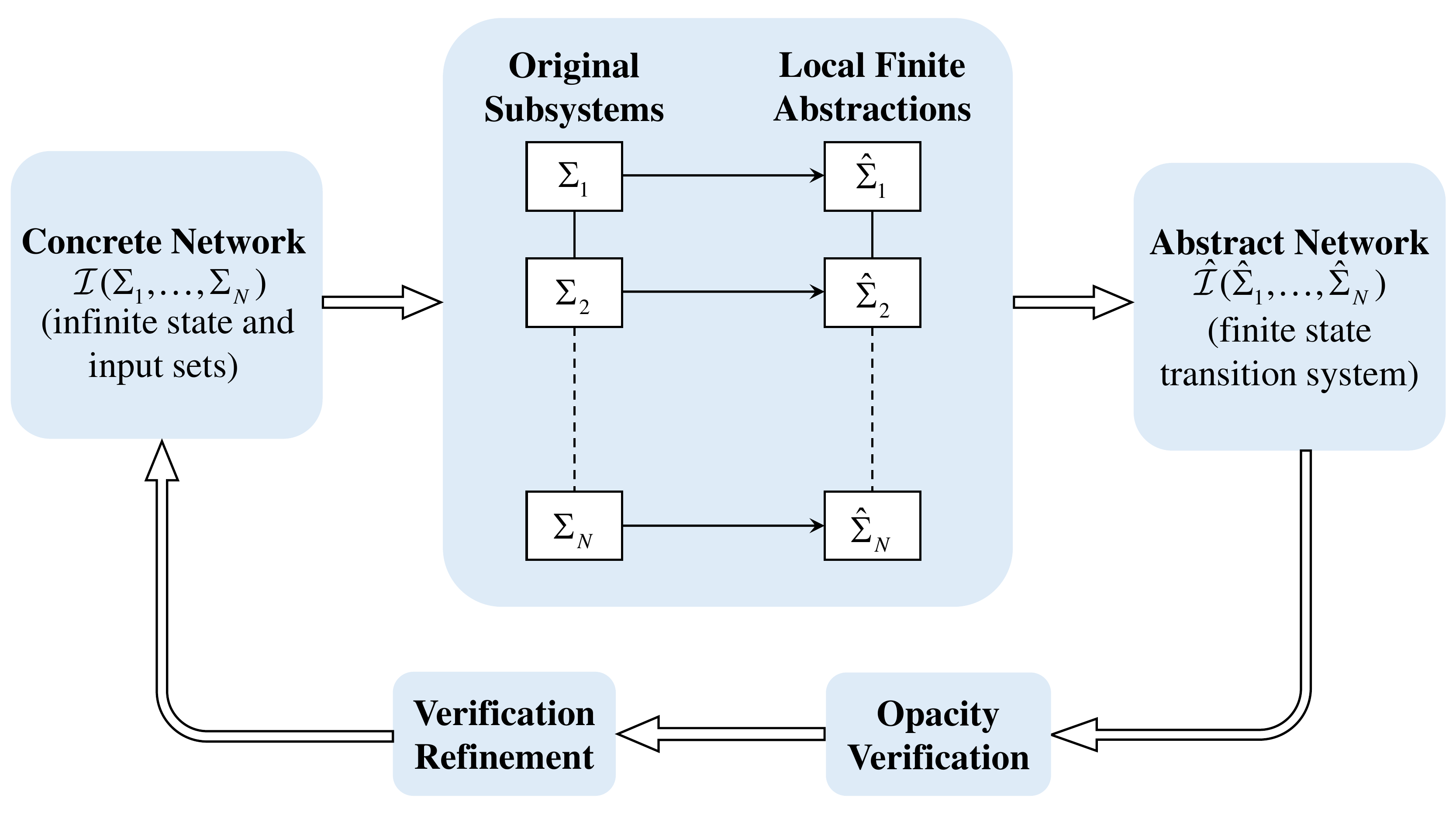}
	\caption{Compositional framework of opacity verification for networks of switched systems.}
	\label{pipeline}
\end{figure}

\subsection{Organization} The rest of this paper is organized as follows. In Section~\ref{1:II}, we first introduce necessary notations and preliminaries of the paper. Then, a new notion of approximate opacity preserving simulation functions (InitSOPSFs) is proposed in Section~\ref{sec:InitSOPSF}. In Section~\ref{sec:compoabs}, we provide a compositional framework for the construction of InitSOPSF for a network of discrete-time switched systems. In Section~\ref{1:IV}, we present how to construct local finite abstractions for a class of incrementally input-to-state stable subsystems, and then propose a small-gain type condition required for the main compositionality result. 
Next, an illustrative example is provided in Section~\ref{sec:example} that showcases how one can leverage our compositionality results for the verification of opacity for a network of switched systems. Finally, we conclude the paper in Section~\ref{sec:conclusion}.

	\section{Notation and Preliminaries}\label{1:II}
\subsection{Notation}
We denote by $\R$ and $\N$ the set of real numbers and non-negative integers,  respectively.
These symbols are annotated with subscripts to restrict them in
the obvious way, e.g. $\R_{>0}$ denotes the positive real numbers. We denote the closed, open, and half-open intervals in $\R$ by $[a,b]$,
$(a,b)$, $[a,b)$, and $(a,b]$, respectively. For $a,b\in\N$ and $a\le b$, we
use $[a;b]$, $(a;b)$, $[a;b)$, and $(a;b]$ to
denote the corresponding intervals in $\N$.  Given any $a\in\R$, $\vert a\vert$ denotes the absolute value of $a$.
Given $N\in\N_{\ge1}$ vectors $\nu_i\in\R^{n_i}$, $n_i\in\N_{\ge1}$, and $i\in[1;N]$, we
use $\nu=[\nu_1;\ldots;\nu_N]$ to denote the vector in $\R^n$ with
$n=\sum_i n_i$ consisting of the concatenation of vectors~$\nu_i$. Moreover, $\Vert \nu\Vert$ denotes the infinity norm of $\nu$.
The individual elements in a matrix $A\in \R^{m\times n}$, are denoted by $\{A\}_{i,j}$, where $i\in[1;m]$ and $j\in[1;n]$. 
We denote by $\text{card}(\cdot)$ the cardinality of a given set and by $\emptyset$ the empty set. 
For any set \mbox{$S\subseteq\R^n$} of the form of finite union of boxes, e.g., $S=\bigcup_{j=1}^MS_j$ for some $M\in\N$, where $S_j=\prod_{i=1}^{n} [c_i^j,d_i^j]\subseteq \R^{n}$ with $c^j_i<d^j_i$, we define $\emph{span}(S)=\min_{j=1,\ldots,M}\eta_{S_j}$ and $\eta_{S_j}=\min\{|d_1^j-c_1^j|,\ldots,|d_{n}^j-c_{n}^j|\}$. 
Moreover, for a set in the form of $X= \prod_{i=1}^N X_i$, where $X_i \subseteq \R^{n_i}$, $\forall i\in[1;N]$, are of the form of finite union of boxes, and any positive (component-wise) vector $\phi = [\phi_1;\dots;\phi_N]$ with $\phi_i \leq \emph{span}(X_i)$, $\forall i\in [1;N]$, we define $[X]_\phi= \prod_{i=1}^N [X_i]_{\phi_i}$, where $[X_i]_{\phi_i} = [\R^{n_i}]_{\phi_i}\cap{X_i}$ and  $[\R^{n_i}]_{\phi_i}=\{a\in \R^{n_i}\mid a_{j}=k_{j}\phi_i,k_{j}\in\mathbb{Z},j=1,\ldots,n_i\}$.
Note that if $\phi = [\eta;\dots;\eta]$, where $0<\eta\leq\emph{span}(S)$, we simply use notation $[S]_{\eta}$ rather than $[S]_{\phi}$. 
With a slight abuse of notation, we write $[S]_{0}:=S$. Note that $[S]_{\eta}\neq\emptyset$ for any $0\leq\eta\leq\emph{span}(S)$.
We use notations $\mathcal{K}$ and $\mathcal{K}_\infty$
to denote different classes of comparison functions, as follows:
$\mathcal{K}=\{\alpha:\mathbb{R}_{\geq 0} \rightarrow \mathbb{R}_{\geq 0}|$ $\alpha$ is continuous, strictly increasing, and $\alpha(0)=0\}$; $\mathcal{K}_\infty=\{\alpha \in \mathcal{K}|$ $ \lim\limits_{r \rightarrow \infty} \alpha(r)=\infty\}$.
For $\alpha,\gamma \in \mathcal{K}_{\infty}$ we write $\alpha\le\gamma$ if $\alpha(r)\le\gamma(r)$, and, with abuse of the notation, $\alpha=c$ if $\alpha(r)=cr$ for all $c,r\geq0$. Finally, we denote by $\I$ the identity function over $\R_{\ge0}$, that is $\I(r)=r, \forall r\in \R_{\ge0}$.
Given sets $X$ and $Y$ with $X\subset Y$, the complement of $X$ with respect to $Y$ is defined as $Y \backslash X = \{x : x \in Y, x \notin X\}.$ 

\subsection{Discrete-Time Switched Systems} 
We consider discrete-time switched systems of the following form.
\begin{definition}\label{dtss}
	A discrete-time switched system (dt-SS) $\Sigma$ is defined by the tuple $\Sigma=(\mathbb X,\mathbb X_0,\mathbb X_s, P,\mathbb W,F,\mathbb Y,h)$,
	where 
	\begin{itemize}
		\item $\mathbb X\subseteq\R^n$ is the state set;
		\item $\mathbb X_0\subseteq\R^n$ is the initial state set;
		\item $\mathbb X_s\subseteq\R^n$ is the secret state set;
		\item $P=\{1,\dots,m\}$ is the finite set of modes;
		\item $\mathbb W\subseteq\R^m$ is the internal input set;
		\item $F=\{f_1,\dots,f_m\}$ is a collection of set-valued maps $f_p: \mathbb X\times \mathbb W\rightrightarrows\mathbb X $ for all $p\in P$;
		\item $\mathbb Y\subseteq\R^q$ is the output set;
		\item $h: \mathbb X \rightarrow \mathbb Y $ is the output map.
	\end{itemize} 
	The dt-SS $\Sigma $ is described by difference inclusions of the form
	\begin{align}\label{eq:2}
		\Sigma:\left\{
		\begin{array}{rl}
			\mathbf{x}(k+1)&\in f_{\mathsf{p}(k)}(\mathbf{x}(k),\omega(k)),\\
			\mathbf{y}(k)&=h(\mathbf{x}(k)),
		\end{array}
		\right.
	\end{align}where $\mathbf{x}:\mathbb{N}\rightarrow \mathbb X $, $\mathbf{y}:\mathbb{N}\rightarrow \mathbb Y$, $\mathsf{p}:\mathbb{N}\rightarrow P$, and $\omega:\mathbb{N}\rightarrow \mathbb W$ are the state, output, switching, and internal input signal, respectively.
\end{definition}
Let $\varphi_k, k \in \N_{\ge1}$, denote the time when the $k$-th switching instant occurs. 
We assume that signal $\mathsf{p}$ satisfies a dwell-time condition \cite{liberzon} (i.e. there exists $k_d \in \N_{\ge1}$, called the dwell-time, such that for all consecutive switching time instants $\varphi_k,\varphi_{k+1}$, $\varphi_{k+1}-\varphi_{k}\geq k_d$).	
If for all $x\in  \mathbb X, p\in  P,  w \in  \mathbb W$, $\text{card}(f_{p}(x,w))\leq1$, we say the system $\Sigma$ is deterministic, and non-deterministic otherwise. System $\Sigma$ is called finite if $ \mathbb X, \mathbb W$ are finite sets and infinite otherwise. Furthermore, if for all $x\in  \mathbb X$ 
there exist $ p\in  P$ and $ w \in  W $ such that $\text{card}(f_p(x,w))\neq0$ we say the system is non-blocking. In this paper, 	we only deal with non-blocking systems.

\subsection{Transition systems}\label{subsec:transys}
In this subsection, we employ the notion of transition systems, introduced in \cite{swikirecc}, to provide an alternative description of switched systems that can be later directly related to their finite abstractions in a common framework. 
\begin{definition}\label{tsm} Given a dt-SS $\Sigma=(\mathbb X,\mathbb X_0,\mathbb X_s,P,\mathbb W,F,\mathbb Y,h)$, we define the associated transition system $T(\Sigma)=(X,X_0,X_s,U,W,\TF,Y,\op)$, where:
	\begin{itemize}
		\item  $X=\mathbb X\times P\times \{0,\dots,k_d-1\}$ is the state set; 
		\item  $X_0=\mathbb X_0\times P\times \{0\}$ is the initial state set; 
		\item  $X_s=\mathbb X_s\times P\times \{0,\dots,k_d-1\}$  is the secret state set; 
		\item $U=P$ is the external input set;
		\item $W=\mathbb{W}$ is the internal input set;
		\item $\TF$ is the transition function given by $(x^+,p^+,l^+)\in \TF((x,p,l),u,w)$ if and only if  $x^+\in f_p(x,w),u=p$ and the following scenarios hold:
		\begin{itemize} 
			\item$l<k_d-1$, $p^+=p$ and $l^+=l+1$: switching is not allowed because the time elapsed since
			the latest switch is strictly smaller than the dwell time;
			\item $l=k_d-1$, $p^+=p$ and $l^+=k_d-1$: switching is allowed but no switch occurs;
			\item $l=k_d-1$, $p^+\neq p$ and $l^+=0$: switching
			is allowed and a switch occurs;
		\end{itemize}
		\item $Y=\mathbb{Y}$ is the output set;
		\item $\mathcal{H}:X\rightarrow Y$ is the output map defined as $\mathcal{H}(x,p,l)=h(x)$.
	\end{itemize}
\end{definition}
Note that in the above definition, two additional variables $p$ and $l$ are added to the state tuple of the system $\Sigma$. The variable $p$ captures whether or not a switching is allowed for the system at a given time instant, and $l$ serves as a memory to record the sojourn of switching signal.

The following proposition is borrowed from \cite{swikir2019compositional} showing that the output runs of a dt-SS $\Sigma$ and its associated transition system $T(\Sigma)$ are equivalent so that one can use $\Sigma$ and $T(\Sigma)$ interchangeably.
\begin{proposition}\label{traj}
	Consider a transition system $T(\Sigma)$ in Definition \ref{tsm} associated to $\Sigma$ as in Definition \ref{dtss}. Any output trajectory of $\Sigma$ can be uniquely equated to an output trajectory of $T(\Sigma)$ and vice versa.
\end{proposition}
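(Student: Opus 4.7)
The plan is to construct an explicit bijection between output trajectories of $\Sigma$ and those of $T(\Sigma)$, and then observe that the output maps $h$ and $\mathcal{H}$ agree on the first state component, so that corresponding trajectories produce identical output signals.

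First I would handle the direction $\Sigma\to T(\Sigma)$. Given any state signal $\mathbf{x}$, switching signal $\mathsf{p}$ satisfying the dwell-time condition with parameter $k_d$, and internal input $\omega$ of $\Sigma$, I would lift this to a run of $T(\Sigma)$ by taking the state at time $k$ to be $(\mathbf{x}(k),\mathsf{p}(k),l(k))$, the external input to be $u(k)=\mathsf{p}(k)$, and the internal input to be $w(k)=\omega(k)$, where $l(k)$ records the number of time steps elapsed since the most recent switching instant (truncated at $k_d-1$) and $l(0)=0$. Verifying that this is a valid run of $T(\Sigma)$ reduces to a case analysis of the transition rule $\TF$ in terms of the value of $l(k)$ and of whether $\mathsf{p}(k+1)=\mathsf{p}(k)$ or not: the dwell-time assumption forbids the case $\mathsf{p}(k+1)\neq\mathsf{p}(k)$ when $l(k)<k_d-1$, while the three admissible scenarios listed in Definition~\ref{tsm} cover all remaining situations.

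Next I would address the direction $T(\Sigma)\to\Sigma$. Given any run of $T(\Sigma)$ with state signal $(\tilde{\mathbf{x}},\tilde{\mathsf{p}},\tilde{l})$, external input $\tilde u$, and internal input $\tilde w$, I would project to $\mathbf{x}:=\tilde{\mathbf{x}}$, $\mathsf{p}:=\tilde{\mathsf{p}}$ (which coincides with $\tilde u$ by the constraint $u=p$ in $\TF$), and $\omega:=\tilde w$. The defining relation $\tilde{\mathbf{x}}(k+1)\in f_{\tilde{\mathsf{p}}(k)}(\tilde{\mathbf{x}}(k),\tilde w(k))$ coincides exactly with the dynamics~\eqref{eq:2} of $\Sigma$, and the structural constraints of $\TF$ only permit $\tilde{\mathsf{p}}(k+1)\neq\tilde{\mathsf{p}}(k)$ when $\tilde{l}(k)=k_d-1$. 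A short induction then shows that consecutive switching instants of $\mathsf{p}$ are separated by at least $k_d$ steps, i.e.\ the dwell-time condition is satisfied.

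Finally, since $\mathcal{H}(x,p,l)=h(x)$, the output signals of corresponding trajectories coincide at every time step, so the two maps constructed above invert each other on output trajectories. Uniqueness follows because $l$ is completely determined by the past of $\mathsf{p}$ together with the convention $l(0)=0$, so the forward lift is unambiguous, and the backward projection onto $(\mathbf{x},\mathsf{p},\omega)$ is tautologically unique. The only genuinely technical point is the accounting for the auxiliary counter $l$ and the verification that the three scenarios in $\TF$ characterize exactly those switching sequences respecting the dwell time; once this is settled, everything else is a direct reading off of the definitions.
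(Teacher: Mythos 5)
Your proof is correct and takes essentially the same route as the paper, which does not reprove the proposition but defers to the argument in \cite{swikir2019compositional}: lift each trajectory of $\Sigma$ to a run of $T(\Sigma)$ via the auxiliary clock $l$ with $l(0)=0$, check the three transition scenarios of $\TF$ against the dwell-time condition, project back for the converse, and observe that $\mathcal{H}(x,p,l)=h(x)$ makes the output signals coincide pointwise. No gaps; the only implicit convention you share with the paper is that the dwell-time bound also governs the time of the \emph{first} switch (consistent with $X_0=\mathbb X_0\times P\times\{0\}$), which is the intended reading.
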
	

Next, let us introduce a formal definition of networks of dt-SS (or equivalently, networks of transition systems).

\subsection{Networks of Systems}
Consider $N\in\N_{\ge1}$ dt-SS $\Sigma_i=(\mathbb X_i, \mathbb X_{0_i},\mathbb X_{s_i}, P_i,\mathbb W_i,$ $F_i,\mathbb Y_i,h_i)$, $i\in[1;N]$, with partitioned internal inputs and outputs as
\begin{align}\label{eq:int1}
	w_i=[w_{i1};\ldots;w_{i(i-1)};w_{i(i+1)};\ldots;w_{iN}],& \quad
	\mathbb{W}_i=\prod_{j=1, j\neq i}^{N} \mathbb{W}_{ij},\\\label{eq:int2}
	h_{i}(x_i)=[h_{i1}(x_i);\ldots; h_{iN}(x_i)],& \quad
	\mathbb Y_i=\prod_{j=1}^N  \mathbb Y_{ij},
\end{align}with $w_{ij} \in \mathbb{W}_{ij}$, and $y_{ij} =h_{ij}(x_i)\in \mathbb Y_{ij}$.
The outputs $y_{ii}$ are considered as external ones, whereas $y_{ij}$  with $i\neq j$ are interpreted as internal ones which are used to construct interconnections between systems. In particular, we assume that $w_{ij}=y_{ji}$, if there is connection from system $\Sigma_{j}$ to
$\Sigma_i$, otherwise, we set $h_{ji}\equiv 0$. In the sequel, we denote by $\mathcal{N}_i = \{j \in[1;N], j\neq i|h_{ji}\neq 0\}$ the collection of neighboring systems $\Sigma_j,j\in\mathcal{N}_i$, that provide internal inputs to system $\Sigma_i$.

Now, we are ready to provide a formal definition of the concrete network consisting of $N\in\N_{\ge1}$ dt-SS.
\begin{definition}\label{netsw}
	Consider $N\in\N_{\ge1}$ dt-SS $\Sigma_i=(\mathbb X_i, \mathbb X_{0_i},\mathbb X_{s_i}, P_i,\mathbb W_i,$ $F_i,\mathbb Y_i,h_i)$, $i\in[1;N]$ with the input-output structure given by $\eqref{eq:int1}$ and $\eqref{eq:int2}$. The network, representing the interconnection of  $N\in\N_{\ge1}$ dt-SS $\Sigma_i$, 
	is a tuple $\Sigma=(\mathbb X,\mathbb X_0,\mathbb X_s,P,F,\mathbb Y,h)$, denoted by $\mathcal{I}(\Sigma_1,\ldots,\Sigma_N)$, where $\mathbb X =\prod_{i=1}^N \mathbb X_i$, $\mathbb X_0 =\prod_{i=1}^N \mathbb X_{0_i}$, $\mathbb X_s =\prod_{i=1}^N \mathbb X_{s_i}$,
	$P=\prod_{i=1}^N  P_i$, ${F}=\prod_{i=1}^N{F}_i$, $ \mathbb Y=\prod_{i=1}^N  \mathbb Y_{ii}$, 
	$h(x)\Let \intcc{h_{11}(x_1);\ldots;h_{NN}(x_N)}$ with $x=\intcc{x_{1};\ldots;x_{N}}$, subject to the constraint:
	\begin{align}\label{const}
		y_{ji}= w_{ij}, \mathbb{Y}_{ji} \subseteq \mathbb{W}_{ij}, \forall i\in [1;N], j\in\mathcal{N}_i. 
	\end{align}
\end{definition} 

Similarly, given transition systems $T(\Sigma_i)$, one can also define a network of transition systems $\mathcal{I}(T(\Sigma_1),\!\ldots\!,T(\Sigma_N))$. For the rest of the paper, we mainly deal with the transition systems as they allow us to model dt-SS $\Sigma$ and their finite abstractions in a common framework.

For an interconnection of $N\in\N_{\ge1}$ finite dt-SS $\hat \Sigma_i$, with input-output structure configuration as in  $\eqref{eq:int1}$ and $\eqref{eq:int2}$, we introduce the following definition of networks of finite dt-SS.
\begin{definition}\label{absnetsw}
	Consider $N\in\N_{\ge1}$ finite dt-SS $\hat \Sigma_i=( \hat {\mathbb X}_i, \hat {\mathbb X}_{0_i},\hat {\mathbb X}_{s_i}, \hat P_i,\hat {\mathbb W}_i,$ $ \hat F_i,\hat {\mathbb Y}_i,$ $\hat h_i)$, $i\in[1;N]$ with the input-output structure given by $\eqref{eq:int1}$ and $\eqref{eq:int2}$. The network, representing the interconnection of  $N\in\N_{\ge1}$ finite dt-SS $\hat \Sigma_i$, 
	is a tuple $\hat\Sigma=(\hat {\mathbb X},\hat {\mathbb X}_0,\hat {\mathbb X}_s,\hat P,\hat F,\hat {\mathbb Y},\hat h)$, denoted by $\hat {\mathcal{I}}(\hat \Sigma_1,\ldots,\hat \Sigma_N)$, where $\hat {\mathbb X} =\prod_{i=1}^N \hat {\mathbb X}_i$, $\hat {\mathbb X}_0 =\prod_{i=1}^N \hat {\mathbb X}_{0_i}$, $\hat {\mathbb X}_s =\prod_{i=1}^N \hat {\mathbb X}_{s_i}$,
	$\hat P=\prod_{i=1}^N  \hat P_i$, $\hat {F}=\prod_{i=1}^N \hat {F}_i$, $\hat {\mathbb Y}=\prod_{i=1}^N \hat {\mathbb Y}_{ii}$, 
	$\hat h(x)\Let \intcc{\hat h_{11}(\hat x_1);\ldots;\hat h_{NN}(\hat x_N)}$ with $\hat x=\intcc{\hat x_{1};\ldots;\hat x_{N}}$, 	subject to the constraint:
	\begin{align}\label{absconst}
		\Vert \hat y_{ji}-\hat w_{ij} \Vert \leq \phi_{ij}, [\hat {\mathbb{Y}}_{ji}]_{\phi_{ij}} \subseteq \hat {\mathbb{W}}_{ij}, \forall i\in [1;N], j\in\mathcal{N}_i, 
	\end{align}
	where $\phi_{ij}$ is an internal input quantization parameter designed for constructing local finite abstractions (cf. Definition \ref{smm}).
\end{definition} 
Similarly, given finite transition systems $T(\hat \Sigma_i)$, one can also define a network of transition systems as $\hat{\mathcal{I}}(T(\hat \Sigma_1),\ldots,T(\hat \Sigma_N))$.

An example of a concrete network and an abstract network is illustrated in Figure \ref{system1}, where each consists of three switched subsystems.

\begin{figure}[h!]
	\centering
	\includegraphics[width=0.7\textwidth]{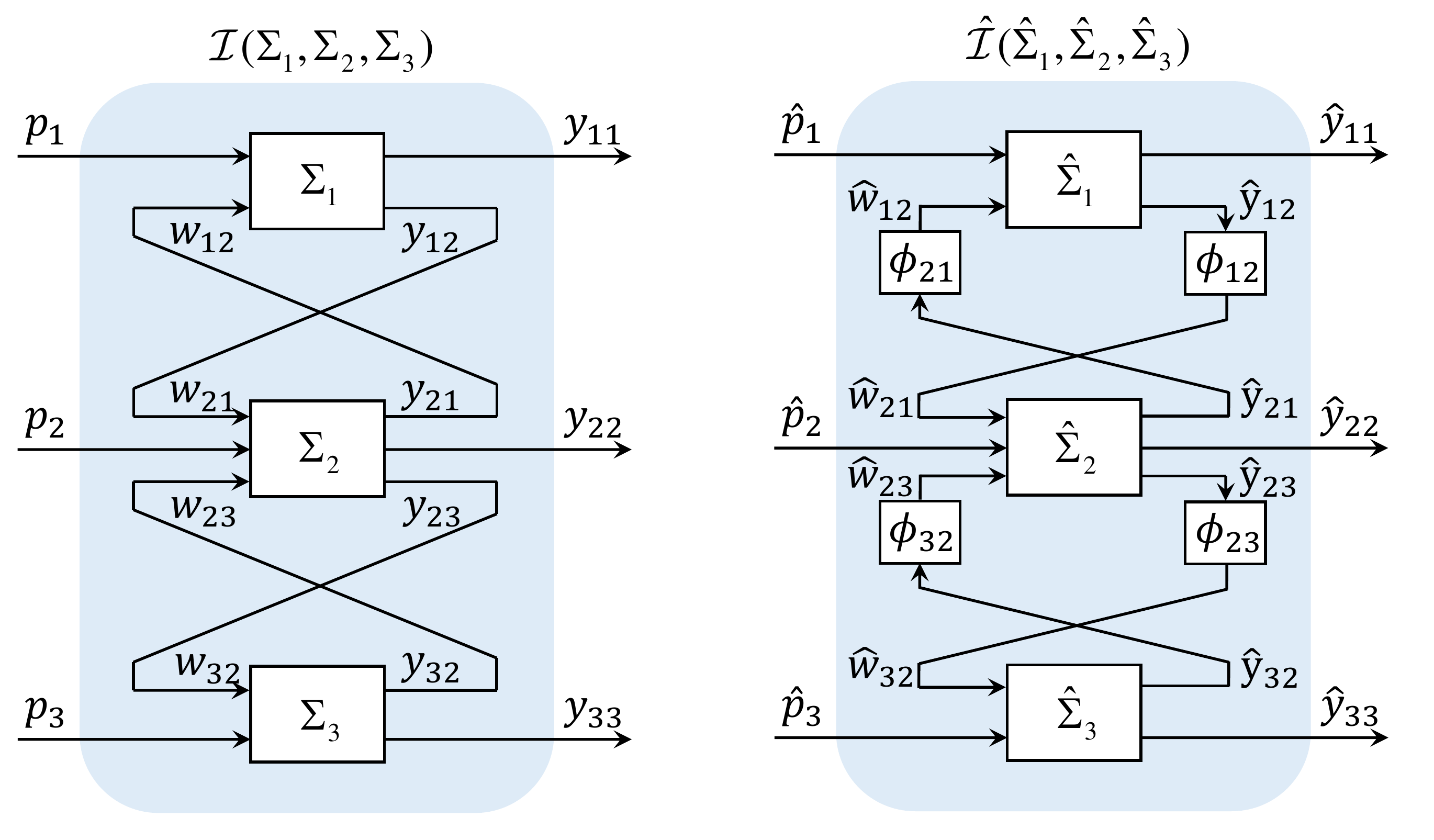}
	\caption{[Left]: Concrete network composed of three switched subsystems $\Sigma_1$, $\Sigma_2$, and $\Sigma_3$ with $h_{13}=h_{31}=0$, where $y_{ji}= w_{ij}$, $\forall i,j \in [1;3]$; [Right]: Abstract network composed of three finite subsystems $\hat \Sigma_1$, $\hat \Sigma_2$, and $\hat \Sigma_3$ with $\hat h_{13}= \hat h_{31}=0$, and the internal inputs $\hat w_{ij}$ for system $\hat \Sigma_i$ are taken from the discretized internal outputs of system $\hat \Sigma_j$ under the constraint $\Vert \hat y_{ji}-\hat w_{ij} \Vert \leq \phi_{ij}$, $\forall i,j \in [1;3]$, where $\phi_{ij}$ are internal input quantization parameters.}
	\label{system1}
\end{figure}

\begin{remark}
	Note that in the above definitions, the interconnection constraint in \eqref{const} for the concrete network is different from that for the abstract network in \eqref{absconst}. 
	For networks of finite abstractions,  due to possibly different granularities of finite internal input sets $\hat {\mathbb{W}}_{ij}$ and output sets $\hat {\mathbb{Y}}_{ij}$, we introduce parameters $\phi_{ij}$ in \eqref{absconst} for having a well-posed interconnection. The values of $\phi_{ij}$ will be designed later in Definition \ref{smm} when constructiong local finite abstractions of the subsystems.
	
\end{remark}

Before introducing the notion of approximate initial-state opacity for networks of transition systems, we introduce some notations that will be used to characterize opacity property. 
Consider network $T(\Sigma)$. We use $z^k$ to denote the state of $T(\Sigma)$ reached at time $k \in \mathbb{N}$ from initial state $z^0$ under an input sequence ${\bar u}$ with length $k$, and denote by $\{z^0, z^1, \dots, z^n\}$ a finite state run of $T(\Sigma)$ with length $n\in \mathbb{N}$.

\subsection{Approximate Initial-state Opacity}
Here, let us review a notion of approximate initial-state opacity \cite{yin2019approximate}. In this context, the system's behaviors are assumed to be observed by an outside intruder which aims at inferring secret information of the system. The adopted concept of secrets are formulated as state-based.

\begin{definition}
	Consider network $T(\Sigma)=(X,X_0,X_s,U,\TF,Y,\op)$  
	and a constant $\delta \geq 0$. Network $T(\Sigma)$ is said to be
	$\delta$-approximate initial-state opaque if for any $z^0 \in X_0 \cap X_s$ and finite state run $\{z^0, z^1, \dots, z^n\}$, there exist $\bar{z}^0 \in X_0 \setminus X_s$ and a finite state run $\{\bar{z}^{0}, \bar{z}^{1}, \dots, \bar{z}^{n}\}$ such that 		
	$$\max_{k \in [0;n]} \Vert \mathcal{H}(z^k)-\mathcal{H}(\bar{z}^k) \Vert \leq \delta.$$
\end{definition}

Intuitively, the notion of $\delta$-approximate initial-state opacity requires that, whenever observing any output run, an intruder with measurement precision $\delta$ is never certain that the system is initiated from a secret state. In other words, the systems' secret information can never be revealed in the presence of an intruder that does not have an enough measurement precision.

\begin{remark}
	The approximate initial-state opacity is, in general, hard to check for a concrete network since there is no systematic way in the literature to check opacity for systems with infinite state set so far. 
	On the other hand, existing tool DESUMA\footnote{Available at URL http://www.eecs.umich.edu/umdes/toolboxes.html.} and algorithms~\cite{yin2017new},~\cite{saboori2013verification},\cite[Sec. IV]{zhang2019opacity} in DESs literature can be leveraged to
	check \emph{exact} opacity for networks with finite state sets. Therefore, it would be more feasible to verify opacity for networks consisting of finite abstractions and then carry back the verification result to concrete ones, given a formal simulation relation between those networks. To this purpose, an opacity-preserving simulation relation  will be introduced in the next section which formally relate a network of transition systems and its finite abstraction.  
\end{remark}


\section{Opacity Preserving Simulation Functions}\label{sec:InitSOPSF}

In this section, we introduce a notion of approximate initial-state opacity-preserving simulation function to quantitatively relate two networks of transition systems in terms of preserving approximate initial-state opacity. Such a function can be constructed compositionally as shown in Section \ref{sec:compoabs}.

Let us first recall the definition of approximate initial-state opacity-preserving simulation relations which was originally proposed in \cite{yin2019approximate}.
\begin{definition}\label{def:InitSOP}
	Consider networks $T(\Sigma)=(X,X_0,X_s,U,\TF,Y,$ $\op)$ and $T(\hat{\Sigma})=(\hat{X},\hat X_0,\hat X_s,\hat{U},\hat{\TF},\hat{Y},\hat{\op})$ where $\hat Y\subseteq Y$. For $\hat{\varepsilon} \in \mathbb R_{\ge 0}$, a relation  $R\subseteq X\times \hat{X}$  is called an  $\hat{\varepsilon}$-approximate initial-state opacity-preserving simulation relation ($\hat{\varepsilon}$-InitSOP simulation relation) from ${T}(\Sigma)$ to $T(\hat{\Sigma})$ if 
	\begin{enumerate} 
		\item[1] (${a}$) $\forall z^0 \in {X}_0 \cap {X}_s$, $\exists \hat z^0 \in \hat {X}_0 \cap \hat {X}_s$, s.t.	$(z^0,\hat z^0) \in R$;\\
		(${b}$) $\forall \hat z^0 \in \hat {X}_0 \setminus \hat {X}_s$, $\exists z^0 \in {X}_0 \setminus {X}_s$, s.t. $(z^0,\hat z^0) \in R$;
		\item[2]  $\forall (z, \hat{z}) \in R$, $\Vert \mathcal{H}(z) - \mathcal{\hat H}(\hat z) \Vert\leq \hat{\varepsilon}$;
		\item[3]  For any $(z, \hat{z}) \in R$, one has:\\
		(${a}$) $\forall u\in  U$, $\forall z^+ \in\TF(z,u)$, $\exists \hat u\in \hat U$, $\exists \hat z^+ \in\hat{\TF}(\hat{z},\hat{u})$, s.t. $(z^+,\hat z^+) \in R $; \\
		(${b}$) $\forall \hat u\in \hat U$, $\forall \hat z^+ \in \hat{\TF}(\hat{z},\hat{u})$, $\exists u\in  U$, $\exists z^+ \in \TF(z,u)$, s.t. $(z^+,\hat z^+) \in R $.
	\end{enumerate}
\end{definition}
The following corollary borrowed from \cite{yin2019approximate} shows the usefulness of Definition \ref{def:InitSOP} in terms of preserving approximate opacity across related networks.
\begin{corollary}\label{thm:InitSOP}
	Consider networks $T(\Sigma)=(X,X_0,X_s,U,\TF,Y,$ $\op)$ and $T(\hat{\Sigma})=(\hat{X},\hat X_0,\hat X_s,\hat{U},\hat{\TF},\hat{Y},\hat{\op})$ where $\hat Y\subseteq Y$. Let $\hat\varepsilon,\delta\in\mathbb R_{\ge 0}$.
	If  there exists an $\hat{\varepsilon}$-InitSOP simulation relation from ${T}(\Sigma)$ to $T(\hat{\Sigma})$ as in Definition \ref{def:InitSOP} and $\hat{\varepsilon}\leq \frac{\delta}{2}$,
	then the following implication holds \vspace{-0.1cm}
	\begin{align}
		&T(\hat{\Sigma})\tup{ is ($\delta-2\hat{\varepsilon}$)-approximate initial-state opaque} \nonumber \\
		&\Rightarrow {T}(\Sigma) \tup{ is $\delta$-approximate initial-state opaque}.\nonumber
	\end{align}
\end{corollary}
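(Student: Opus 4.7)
The plan is a direct chase through the simulation relation, followed by the triangle inequality. The argument proceeds in four movements: lift a secret run from $T(\Sigma)$ up to $T(\hat{\Sigma})$ via condition $3(a)$, invoke approximate opacity in $T(\hat{\Sigma})$ to obtain an indistinguishable non-secret run, descend that non-secret run back down to $T(\Sigma)$ via condition $3(b)$, and finally aggregate the three output-gap bounds.

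First, I would fix an arbitrary $z^0 \in X_0 \cap X_s$ and an arbitrary finite state run $\{z^0, z^1, \ldots, z^n\}$ of $T(\Sigma)$. By condition $1(a)$ of Definition \ref{def:InitSOP}, there exists $\hat{z}^0 \in \hat{X}_0 \cap \hat{X}_s$ with $(z^0, \hat{z}^0) \in R$. I then inductively construct $\hat{z}^1, \ldots, \hat{z}^n$ by applying condition $3(a)$ at each step: given $(z^k, \hat{z}^k) \in R$ and the transition $z^{k+1} \in \TF(z^k, u^k)$ realized by the concrete run, condition $3(a)$ yields some $\hat{u}^k$ and $\hat{z}^{k+1} \in \hat{\TF}(\hat{z}^k, \hat{u}^k)$ with $(z^{k+1}, \hat{z}^{k+1}) \in R$. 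The sequence $\{\hat{z}^0, \ldots, \hat{z}^n\}$ is a finite state run of $T(\hat{\Sigma})$ starting from a secret initial state, and condition $2$ gives $\|\mathcal{H}(z^k) - \hat{\mathcal{H}}(\hat{z}^k)\| \leq \hat{\varepsilon}$ for every $k \in [0;n]$.

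Next, I would invoke the hypothesis that $T(\hat{\Sigma})$ is $(\delta - 2\hat{\varepsilon})$-approximate initial-state opaque, which (using $\hat{\varepsilon} \leq \delta/2$ to guarantee the threshold is non-negative) produces some $\hat{\bar{z}}^{\,0} \in \hat{X}_0 \setminus \hat{X}_s$ and a run $\{\hat{\bar{z}}^{\,0}, \ldots, \hat{\bar{z}}^{\,n}\}$ of $T(\hat{\Sigma})$ satisfying $\|\hat{\mathcal{H}}(\hat{z}^k) - \hat{\mathcal{H}}(\hat{\bar{z}}^{\,k})\| \leq \delta - 2\hat{\varepsilon}$ for each $k \in [0;n]$. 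Now condition $1(b)$ of Definition \ref{def:InitSOP} provides a partner $\bar{z}^0 \in X_0 \setminus X_s$ with $(\bar{z}^0, \hat{\bar{z}}^{\,0}) \in R$, and a second induction, this time using condition $3(b)$ along the abstract run $\{\hat{\bar{z}}^{\,0}, \ldots, \hat{\bar{z}}^{\,n}\}$, builds a concrete run $\{\bar{z}^0, \ldots, \bar{z}^n\}$ of $T(\Sigma)$ with $(\bar{z}^k, \hat{\bar{z}}^{\,k}) \in R$ for all $k$. Condition $2$ again yields $\|\mathcal{H}(\bar{z}^k) - \hat{\mathcal{H}}(\hat{\bar{z}}^{\,k})\| \leq \hat{\varepsilon}$ for all $k$.

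Finally, I would combine the three inequalities through a single application of the triangle inequality:
\[
\|\mathcal{H}(z^k) - \mathcal{H}(\bar{z}^k)\|
\leq \|\mathcal{H}(z^k) - \hat{\mathcal{H}}(\hat{z}^k)\|
+ \|\hat{\mathcal{H}}(\hat{z}^k) - \hat{\mathcal{H}}(\hat{\bar{z}}^{\,k})\|
+ \|\hat{\mathcal{H}}(\hat{\bar{z}}^{\,k}) - \mathcal{H}(\bar{z}^k)\|
\leq \hat{\varepsilon} + (\delta - 2\hat{\varepsilon}) + \hat{\varepsilon} = \delta,
\]
taking the maximum over $k \in [0;n]$ to conclude $\delta$-approximate initial-state opacity of $T(\Sigma)$. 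The proof is essentially a bookkeeping exercise; the only mild subtlety is making sure the two inductive liftings (down via $3(a)$, back up via $3(b)$) are applied in the correct direction with respect to which network's run is being tracked, and that the non-negativity requirement $\delta - 2\hat{\varepsilon} \geq 0$ is used to legitimately invoke approximate opacity on the abstraction.
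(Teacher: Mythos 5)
Your proof is correct; the paper itself does not prove this corollary but borrows it from \cite{yin2019approximate}, and your argument (lift the secret run via condition 3(a) and 1(a), invoke $(\delta-2\hat{\varepsilon})$-opacity of the abstraction, pull the non-secret abstract run back via 1(b) and 3(b), then apply the triangle inequality to get $\hat{\varepsilon}+(\delta-2\hat{\varepsilon})+\hat{\varepsilon}=\delta$) is exactly the standard proof given in that reference. You also correctly flag the one genuine hypothesis use, namely that $\hat{\varepsilon}\leq\frac{\delta}{2}$ guarantees $\delta-2\hat{\varepsilon}\geq 0$ so that opacity of $T(\hat{\Sigma})$ is well-posed, so there is nothing to add.
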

The above implication across related networks provides us a sufficient condition for verifying approximate initial-state opacity of a complex network using abstraction-based techniques. Particularly, when confronted with a large network of switched systems, one can construct a finite abstraction  $T(\hat{\Sigma})$  of the concrete network $T(\Sigma)$, conduct the opacity verification over the simpler network $T(\hat{\Sigma})$ and carry back the results to the concrete one. However, the above-mentioned InitSOP simulation relation is in general difficult to establish, especially when one is interested to build such a relation in a compositional framework. Therefore, in the following, we introduce a new notion of approximate initial-state opacity-preserving simulation function that relates two networks in terms of preserving approximate initial-state opacity.

\begin{definition}\label{sfg}
	Consider networks $T(\Sigma)=(X,X_0,X_s,U,\TF,Y,\op)$ and $T(\hat{\Sigma})=(\hat{X},\hat X_0,\hat X_s,$ $\hat{U},\hat{\TF},\hat{Y},\hat{\op})$ with $\hat Y\subseteq Y$. For $\varepsilon \in \mathbb{R}_{\geq0}$, a function ${\SF}: X\times \hat X \to \mathbb{R}_{\geq0} $ is called an $\varepsilon$-approximate initial-state opacity-preserving simulation function ($\varepsilon$-InitSOPSF) from ${T}(\Sigma)$ to $T(\hat{\Sigma})$ if there exists a function $\alpha \in \mathcal{K_{\infty}}$ such that 
	\begin{enumerate}
		\item[1](a) $\forall z^0 \in {X}_0 \cap {X}_s$, $\exists \hat z^0 \in \hat {X}_0 \cap \hat {X}_s$, s.t. ${\SF}(z^0,\hat z^0) \leq \varepsilon $;\\
		(b) $\forall \hat z^0 \in \hat {X}_0 \setminus \hat {X}_s$, $\exists z^0 \in {X}_0 \setminus {X}_s$, s.t. ${\SF}(z^0,\hat z^0) \leq \varepsilon $;
		\item[2]   $\forall z \in X, \forall \hat z \in \hat {X}$, $\alpha(\Vert \mathcal{H}(z) - \mathcal{\hat H}(\hat z) \Vert) \leq {\SF}(z,\hat z)$;
		\item[3] $\forall z \in X, \forall \hat z \in \hat {X}$ s.t. $\mathcal{S}(z,\hat z) \leq \varepsilon$, one has: \\
		(a) $\forall u\in  U$, $\forall z^+\in\TF(z,u)$, $\exists \hat u\in \hat U$, $\exists \hat z^+ \in\hat{\TF}(\hat{z},\hat{u})$, s.t. ${\SF}(z^+,\hat z^+) \leq \varepsilon$;\\
		(b) $\forall \hat u\in \hat U$, $\forall \hat z^+ \in \hat{\TF}(\hat{z},\hat{u})$, $\exists u\in U$, $\exists z^+ \in \TF(z,u)$,  s.t. ${\SF}(z^+,\hat z^+) \leq \varepsilon$.
	\end{enumerate}
	We say that $T(\hat{\Sigma})$ is an abstraction of ${T}(\Sigma)$ if there exists an $\varepsilon$-InitSOPSF from ${T}(\Sigma)$ to $T(\hat{\Sigma})$. In addition, if $T(\hat{\Sigma})$ is finite ($\hat X$ is a finite set), system $T(\hat{\Sigma})$ is called a finite abstraction (symbolic model) of the network ${T}(\Sigma)$, and is denoted by ${T}(\Sigma) \preceq^{\varepsilon} T(\hat{\Sigma})$.
\end{definition}


Although Definition \ref{sfg} is general in the sense that networks $T(\Sigma)$ and $T(\hat{\Sigma})$ can be either infinite or finite, network $T(\hat{\Sigma})$ practically consists of $N\in\N_{\ge1}$ finite abstractions. Hence, checking approximate initial-state opacity for this network is decidable in comparison to network $T(\Sigma)$.

The next result shows that the existence of an $\varepsilon$-InitSOPSF as we proposed in Definition \ref{sfg} for networks of transition systems implies the existence of an $\hat{\varepsilon}$-InitSOP simulation relation as in Definition \ref{def:InitSOP}.

\begin{proposition}\label{error}
	Consider networks $T(\Sigma)=(X,X_0,X_s,U,\TF,$ $Y,\op)$  and $T(\hat{\Sigma})=(\hat{X},\hat X_0,\hat X_s,\hat{U},\hat{\TF},\hat{Y},\hat{\op})$ where $\hat Y\subseteq Y$. Assume ${\SF}$ is an $\varepsilon$-InitSOPSF from ${T}(\Sigma)$ to $T(\hat{\Sigma})$  as in Definition \ref{sfg}. Then, relation $R\subseteq X\times \hat{X}$ defined by \vspace{-0.2cm}
	\begin{align}\label{re}
		R=\left\{(z,\hat z)\in {X}\times \hat{X}|{\SF}(z,\hat z)\leq {\varepsilon}\right\}, \vspace{-0.3cm}
	\end{align}is an $\hat{\varepsilon}$-InitSOP simulation relation from ${T}(\Sigma)$ to $T(\hat{\Sigma})$ with  \vspace{-0.1cm}
	\begin{align}\label{er}
		\hat{\varepsilon}={\alpha}^{-1}({\varepsilon}).
	\end{align}\vspace{-0.5cm}
\end{proposition}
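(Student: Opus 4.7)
The plan is to verify that the relation $R$ defined in \eqref{re} satisfies each of the three clauses of Definition \ref{def:InitSOP} with $\hat{\varepsilon}=\alpha^{-1}(\varepsilon)$. Each clause of the InitSOP simulation relation definition corresponds almost verbatim to a clause of the $\varepsilon$-InitSOPSF definition, so the proof will essentially consist of translating the sublevel-set condition $\SF(z,\hat z)\le\varepsilon$ into membership in $R$, and applying $\alpha^{-1}$ to extract the output-closeness condition.

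First I would handle condition~1 of Definition \ref{def:InitSOP}. For part (a), given any $z^0\in X_0\cap X_s$, clause 1(a) of Definition~\ref{sfg} produces $\hat z^0\in\hat X_0\cap\hat X_s$ with $\SF(z^0,\hat z^0)\le\varepsilon$; by the definition of $R$ in \eqref{re} this immediately gives $(z^0,\hat z^0)\in R$. Part (b) is symmetric, using clause 1(b) of Definition~\ref{sfg}. Next, for condition~2, I would pick any $(z,\hat z)\in R$, so that $\SF(z,\hat z)\le\varepsilon$. Clause~2 of Definition~\ref{sfg} gives
\begin{equation*}
\alpha\bigl(\|\mathcal{H}(z)-\hat{\mathcal{H}}(\hat z)\|\bigr)\le \SF(z,\hat z)\le\varepsilon.
\end{equation*}
Since $\alpha\in\mathcal{K}_\infty$ is continuous and strictly increasing, it is invertible on $\mathbb{R}_{\ge0}$, and applying $\alpha^{-1}$ (which is itself in $\mathcal{K}_\infty$, hence monotone) to both sides yields $\|\mathcal{H}(z)-\hat{\mathcal{H}}(\hat z)\|\le\alpha^{-1}(\varepsilon)=\hat{\varepsilon}$, as required.

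For condition~3, fix any $(z,\hat z)\in R$, which again means $\SF(z,\hat z)\le\varepsilon$. This is precisely the hypothesis of clause~3 of Definition~\ref{sfg}. For part (a), given $u\in U$ and $z^+\in\TF(z,u)$, clause 3(a) of Definition~\ref{sfg} delivers $\hat u\in\hat U$ and $\hat z^+\in\hat\TF(\hat z,\hat u)$ with $\SF(z^+,\hat z^+)\le\varepsilon$, which by definition of $R$ means $(z^+,\hat z^+)\in R$. Part (b) follows analogously from clause 3(b) of Definition~\ref{sfg}.

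There is no substantive obstacle here: the proof is a direct unwinding of the definitions, exploiting that $R$ is exactly the $\varepsilon$-sublevel set of $\SF$. The only point requiring a brief justification is the invertibility of $\alpha$, which is immediate from $\alpha\in\mathcal{K}_\infty$. Collecting the three verified clauses shows that $R$ is an $\hat{\varepsilon}$-InitSOP simulation relation from $T(\Sigma)$ to $T(\hat{\Sigma})$ with $\hat{\varepsilon}=\alpha^{-1}(\varepsilon)$, completing the proof.
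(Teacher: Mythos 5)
Your proof is correct and follows essentially the same route as the paper's: both verify the three clauses of Definition~\ref{def:InitSOP} directly from the corresponding clauses of Definition~\ref{sfg}, using that $R$ is the $\varepsilon$-sublevel set of $\SF$ and applying $\alpha^{-1}$ for the output-closeness bound. Your explicit remark on the invertibility of $\alpha\in\mathcal{K}_\infty$ is a minor addition the paper leaves implicit.
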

\begin{proof}
	Condition 1 follows immediately from condition 1 in Definition \ref{sfg}, i.e. $S(z^0,\hat z^0) \leq \varepsilon $. 
	Next, we show that  $\forall (z, \hat z) \in R$: $\Vert \mathcal{H}(z) - \mathcal{\hat H}(\hat z) \Vert \leq \hat{\varepsilon}$. From the definition of $R$ and condition 2 in Definition \ref{sfg}, it is readily seen that  $\Vert \mathcal{H}(z) - \mathcal{\hat H}(\hat z) \Vert \leq \alpha^{-1}(\varepsilon)= \hat{\varepsilon}$.
	Finally, we show condition 3 for $R$. Consider any pair of $(z,\hat{z})\in X\times \hat{X}$ in relation $R$ and by the definition of $R$, one has $S(z,\hat{z})\leq \varepsilon$. Additionally, from 3(a) in Definition \ref{sfg}, one also has $\forall u\in  U$, $\forall z^+ \in \TF(z,u)$, $\exists \hat u\in \hat U$, $\exists \hat z^+ \in \hat{\TF}(\hat{z},\hat{u})$ s.t. $S(z^+,\hat z^+) \leq \varepsilon$. Hence, it follows that $(z^+,\hat z^+) \in R$ which satisfies condition 3(${a}$) of $R$. Condition 3(${b}$) can be proved in the same way and is omitted here, which concludes the proof.  
\end{proof}

Given the results of Corollary \ref{thm:InitSOP} and Proposition \ref{error}, one can readily see that if there exists an $\varepsilon$-InitSOPSF from ${T}(\Sigma)$ to $T(\hat{\Sigma})$ as in Definition \ref{sfg} and $T(\hat{\Sigma})$ is ($\delta-2\hat{\varepsilon}$)-approximate initial-state opaque, then $T(\Sigma)$ is $\delta$-approximate initial-state opaque, where $\hat{\varepsilon}={\alpha}^{-1}({\varepsilon}) \leq  \frac{\delta}{2}$, and $\delta \in \mathbb R_{\ge 0}$.

\section{Compositional Construction of Approximate Initial-state Opacity Preserving Simulation Function} \label{sec:compoabs}
As shown in the previous section, the proposed $\varepsilon$-InitSOPSF can be used for checking approximate initial-state opacity of concrete networks by leveraging their finite abstractions.  
However, for a network consisting of a large number of switched systems, constructing the corresponding $\varepsilon$-InitSOPSF and the abstract network monolithically is not feasible in general due to curse of dimensionality. 
Hence, in this section, we introduce a compositional framework based on which one can break down the intricate task in parts that are more manageable to accomplish. 
In particular, we first relate local finite abstractions of the subsystems via so-called local InitSOPSFs. Then, one can obtain the abstract network by interconnecting the local finite abstractions of the subsystems. Additionally, the corresponding $\varepsilon$-InitSOPSF to capture the closeness between the concrete and the abstract networks can be established by composing the local InitSOPSFs as well.

Let us first introduce a notion of local InitSOPSF for switched subsystems with internal inputs in the following subsection.

\subsection{Local Approximate Initial-state Opacity Preserving Simulation Function}

Suppose that we are given $N$ dt-SS $\Sigma_i=(\mathbb X_i,\mathbb X_{0_i},\mathbb X_{s_i}, P_i,\mathbb W_i,F_i,\mathbb Y_i,h_i)$, $i\in[1;N]$, or equivalently, $T(\Sigma_i)=(X_i,X_{0_i},X_{s_i},U_i,W_i,\TF_i,$ $Y_i,\mathcal H_i)$. Moreover, we assume that each system $T(\Sigma_i)$ and its abstraction $T(\hat{\Sigma}_i)$ admit a local $\varepsilon_i$-InitSOPSF as defined next. 

\begin{definition}\label{sf}
	Consider transition systems $T(\Sigma_i)=(X_i,X_{0_i},X_{s_i},U_i,W_i,\TF_i,Y_i,$ $\op_i)$ and $T(\hat{\Sigma}_i)=(\hat{X}_i,\hat X_{0_i},\hat X_{s_i},$ $\hat{U}_i,\hat{W}_i,\hat{\TF}_i,\hat{Y}_i,\hat{\op}_i)$, for all $i\in [1;N]$,  where $\hat W_i\subseteq W_i$ and $\hat Y_i\subseteq Y_i$. For $\varepsilon_i \in \mathbb{R}_{\geq0}$, a function $\mathcal{S}_i : X_i\times \hat X_i \to \mathbb{R}_{\geq0} $ is called a local $\varepsilon_i$-InitSOPSF from $T(\Sigma_i)$ to $T(\hat{\Sigma}_i)$  if there exist a constant $\vartheta_i \in \mathbb R_{\ge 0}$, and a function $\alpha_i \in \mathcal{K_{\infty}}$ such that 
	\begin{enumerate}
		\item[1](a) $\forall z^0_i \in {X}_{0_i} \cap {X}_{s_i}$, $\exists \hat z^0_i \in \hat {X}_{0_i} \cap \hat {X}_{s_i}$, s.t. $\mathcal{S}_i(z^0_i,\hat z^0_i) \leq \varepsilon_i $;\\
		(b) $\forall \hat z^0_i\in \hat {X}_{0_i} \setminus \hat {X}_{s_i}$, $\exists z^0_i \in {X}_{0_i} \setminus {X}_{s_i}$, s.t. $\mathcal{S}_i(z^0_i,\hat z^0_i) \leq \varepsilon_i $;
		\item[2]   $\forall z_i \in X_i, \forall \hat z_i \in \hat {X}_i$, $\alpha_i(\Vert \mathcal{H}_i(z_i) - \mathcal{\hat H}_i(\hat z_i) \Vert) \leq \mathcal{S}_i(z_i,\hat z_i)$;
		\item[3] $\forall z_i \in X_i, \forall \hat z_i \in \hat {X}_i$ s.t. $\mathcal{S}_i(z_i,\hat z_i) \leq \varepsilon_i$, $\forall w_i \in W_i$, $\forall \hat w_i \in \hat{{W}_i}$ s.t. $\Vert w_i- \hat w_i \Vert \leq \vartheta_i$, one has: \\
		(a) $\forall u_i\in U_i$, $\forall z^+_i \in\TF_i(z_i,u_i,w_i)$, $\exists \hat u_i\in \hat U_i$, $\exists \hat z^+_i \in\hat{\TF}_i(\hat{z}_i,\hat{u}_i,\hat{w}_i)$ s.t. $\mathcal{S}_i(z^+_i,\hat z^+_i) \leq \varepsilon_i$;\\
		(b) $\forall \hat u_i\in \hat U_i$, $\forall \hat z^+_i \in \hat{\TF}_i(\hat{z}_i,\hat{u}_i,\hat{w}_i)$, $\exists u_i\in U_i$, $\exists z^+_i \in \TF_i(z_i,u_i,w_i)$ s.t. $\mathcal{S}_i(z^+_i,\hat z^+_i) \leq \varepsilon_i$.
	\end{enumerate}
	We say that $T(\hat{\Sigma}_i)$ is an abstraction of $T(\Sigma_i)$ if there exists a local $\varepsilon_i$-InitSOPSF from $T(\Sigma_i)$ to $T(\hat{\Sigma}_i)$. In addition, if $T(\hat{\Sigma}_i)$ is finite ($\hat X_i$ and $\hat W_i$ are finite sets), system $T(\hat{\Sigma}_i)$ is called a finite abstraction (symbolic model) of $T(\Sigma_i)$, and is denoted by $T(\Sigma_i) \preceq_{L}^{\varepsilon_i} T(\hat{\Sigma}_i)$
\end{definition}
Note that the local $\varepsilon_i$-InitSOPSFs are mainly proposed for constructing a $\varepsilon$-InitSOPSF for the networks and they are not directly used for deducing approximate initial-state opacity-preserving simulation relation. Next, we show how to compose the above defined local $\varepsilon_i$-InitSOPSFs so that it can be used to quantify the distance between two networks.  

\subsection{Compositional Construction of Initial-state Opacity Preserving Simulation Function} 
In this subsection, we provide one of the main results of the paper. 
The following theorem provides a compositional approach for the construction of an $\varepsilon$-InitSOPSF from $T(\Sigma)$ to $T(\hat{\Sigma})$ via local $\varepsilon_i$-InitSOPSFs from $T(\Sigma_i)$ to $T(\hat{\Sigma}_i)$.

\begin{theorem}\label{thm:3}
	Consider network $T(\Sigma)=\mathcal{I}(T(\Sigma_1),\ldots,T(\Sigma_{N}))$. Assume that each $T(\Sigma_i)$ admits an abstraction $T(\hat{\Sigma}_i)$ together with a local $\varepsilon_i$-InitSOPSF $\SF_i$, associated with function $\alpha_{i}$ and constant $\vartheta_i$ as in Definition \ref{sf}. 
	Let $\varepsilon = \max\limits_{i} \varepsilon_i$. 
	If $\forall i \in [1;N]$, $\forall j \in \mathcal{N}_i$, 
	\begin{align} \label{compoquaninit}
		\alpha^{-1}_{j}(\varepsilon_j) + \phi_{ij}\leq \vartheta_i,
	\end{align}where $\phi_{ij}$ is an internal input quantization parameter for constructing the local finite abstractions $T(\hat{\Sigma}_i)$,
	then, function ${\SF}:X\times \hat{X}\rightarrow \R_{\ge0}$ defined as
	\begin{align}\label{defVinit}
		{\SF}&(z,\hat{z})\Let\max\limits_{i}\{ \frac{\varepsilon}{\varepsilon_i} \SF_i(z_{i},\hat{z}_{i}) \},
	\end{align}is an $\varepsilon$-InitSOPSF from $T(\Sigma)=\mathcal{I}(T(\Sigma_1),\ldots,T(\Sigma_{N}))$ to $T(\hat{\Sigma})=\hat {\mathcal{I}}(T(\hat{\Sigma}_1),$ $\ldots,T(\hat{\Sigma}_{N}))$.
\end{theorem}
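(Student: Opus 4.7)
The plan is to verify conditions 1, 2, and 3 of Definition~\ref{sfg} for the composed function $\mathcal{S}$ defined in \eqref{defVinit}. A key observation used throughout is that, since $\varepsilon = \max_i \varepsilon_i$, one has $\mathcal{S}(z,\hat z) \leq \varepsilon$ if and only if $\mathcal{S}_i(z_i, \hat z_i) \leq \varepsilon_i$ for every $i \in [1;N]$; hence every network-level statement reduces to a conjunction of subsystem-level ones to which the local properties from Definition~\ref{sf} can be applied componentwise.

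First I would dispose of conditions 1 and 2, which are essentially bookkeeping. For 1(a), any $z^0 \in X_0 \cap X_s = \prod_i (X_{0_i} \cap X_{s_i})$ splits componentwise: local 1(a) supplies matching $\hat z^0_i \in \hat X_{0_i} \cap \hat X_{s_i}$ with $\mathcal{S}_i(z^0_i, \hat z^0_i) \leq \varepsilon_i$, and concatenation yields $\hat z^0 \in \hat X_0 \cap \hat X_s$ with $\mathcal{S}(z^0, \hat z^0) \leq \varepsilon$. For 1(b), any $\hat z^0 \in \hat X_0 \setminus \hat X_s$ must admit at least one index $i^*$ with $\hat z^0_{i^*} \in \hat X_{0_{i^*}} \setminus \hat X_{s_{i^*}}$; applying local 1(b) at this index (and at every other non-secret component) produces a $z^0$ whose $i^*$-th component is non-secret, so $z^0 \in X_0 \setminus X_s$. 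For condition 2, I would set $\alpha(r) := \min_i \frac{\varepsilon}{\varepsilon_i}\alpha_i(r) \in \mathcal{K}_\infty$; letting $i^*$ realize the infinity norm $\|\mathcal{H}(z) - \hat{\mathcal{H}}(\hat z)\|$, the chain $\|\mathcal{H}_{i^*i^*}(z_{i^*}) - \hat{\mathcal{H}}_{i^*i^*}(\hat z_{i^*})\| \leq \|\mathcal{H}_{i^*}(z_{i^*}) - \hat{\mathcal{H}}_{i^*}(\hat z_{i^*})\|$ together with local condition 2 and monotonicity of $\alpha_{i^*}$ yields the required bound.

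The substance of the argument is condition 3, and this is where the small-gain hypothesis \eqref{compoquaninit} enters. Assume $\mathcal{S}(z, \hat z) \leq \varepsilon$, fix any $u$ and any $z^+ \in \mathcal{F}(z, u)$. In the concrete network the internal inputs are forced by \eqref{const} to $w_{ij} = \mathcal{H}_{ji}(z_j)$, while in the abstract network I am free to pick $\hat w_{ij}$ from the discretized set prescribed in Definition~\ref{absnetsw} subject to $\|\hat{\mathcal{H}}_{ji}(\hat z_j) - \hat w_{ij}\| \leq \phi_{ij}$, so I choose such a $\hat w_{ij}$. A triangle inequality, local condition 2 applied to the sub-output (using $\|\mathcal{H}_{ji}(z_j) - \hat{\mathcal{H}}_{ji}(\hat z_j)\| \leq \|\mathcal{H}_j(z_j) - \hat{\mathcal{H}}_j(\hat z_j)\|$), and \eqref{compoquaninit} yield
\begin{align*}
\|w_{ij} - \hat w_{ij}\| \leq \alpha_j^{-1}(\mathcal{S}_j(z_j, \hat z_j)) + \phi_{ij} \leq \alpha_j^{-1}(\varepsilon_j) + \phi_{ij} \leq \vartheta_i.
\end{align*}
Taking an infinity norm over $j \in \mathcal{N}_i$ gives $\|w_i - \hat w_i\| \leq \vartheta_i$, so local 3(a) returns $\hat u_i$ and $\hat z_i^+$ with $\mathcal{S}_i(z_i^+, \hat z_i^+) \leq \varepsilon_i$. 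Assembling across $i$ produces $\hat z^+ \in \hat{\mathcal{F}}(\hat z, \hat u)$ satisfying \eqref{absconst} with $\mathcal{S}(z^+, \hat z^+) \leq \varepsilon$. Item 3(b) is symmetric: start from an abstract transition with $\hat w_{ij}$ already satisfying \eqref{absconst}, set $w_{ij} := \mathcal{H}_{ji}(z_j)$, verify the identical $\vartheta_i$-bound, apply local 3(b), and concatenate.

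The main obstacle I anticipate is exactly this coupling between the two interconnection constraints \eqref{const} and \eqref{absconst}: the $\vartheta_i$-tolerance of local condition 3 must absorb both the output-matching error $\alpha_j^{-1}(\varepsilon_j)$ coming from condition 2 and the quantization slack $\phi_{ij}$, which is precisely what \eqref{compoquaninit} was arranged to make possible. Care is needed to ensure that a valid quantized $\hat w_{ij}$ genuinely exists at every step, which follows from the well-posedness built into Definition~\ref{absnetsw}.
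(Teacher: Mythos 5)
Your treatment of conditions 1(a), 2 and 3 of Definition~\ref{sfg} is correct and follows essentially the same route as the paper's own proof: the equivalence $\SF(z,\hat z)\le\varepsilon \Leftrightarrow \SF_i(z_i,\hat z_i)\le\varepsilon_i$ for all $i$, the componentwise use of local condition 1(a) (valid because $X_0\cap X_s=\prod_i(X_{0_i}\cap X_{s_i})$), and above all the internal-input chain $\Vert w_{ij}-\hat w_{ij}\Vert\le\Vert y_{ji}-\hat y_{ji}\Vert+\phi_{ij}\le\alpha_j^{-1}(\varepsilon_j)+\phi_{ij}\le\vartheta_i$ --- triangle inequality through \eqref{const} and \eqref{absconst}, local condition 2 applied to the sub-output, then \eqref{compoquaninit} --- are exactly the steps in the paper, for both directions 3(a) and 3(b). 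Your choice $\alpha=\min_i\frac{\varepsilon}{\varepsilon_i}\alpha_i$ differs only cosmetically from the paper's $\alpha=\hat\alpha^{-1}$ with $\hat\alpha=\max_i\{\alpha_i^{-1}\}$; both are legitimate, yours being marginally sharper due to the harmless factors $\varepsilon/\varepsilon_i\ge1$.

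The genuine gap is in your condition 1(b), ironically at the very point where you were more careful than the paper (whose proof dismisses 1(b) as provable ``in the same way'' as 1(a), even though set difference, unlike intersection, does not distribute over products). You correctly observe that $\hat z^0\in\hat X_0\setminus\hat X_s$ only guarantees \emph{some} non-secret component $\hat z^0_{i^*}$, but your recipe --- apply local 1(b) at $i^*$ ``and at every other non-secret component'' --- says nothing about the indices $j$ with $\hat z^0_j\in\hat X_{0_j}\cap\hat X_{s_j}$. For those components Definition~\ref{sf} supplies no matching concrete state: local 1(b) is inapplicable since $\hat z^0_j$ is secret, and local 1(a) quantifies over \emph{concrete} secret states and returns some abstract partner, not a concrete match for your given $\hat z^0_j$. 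Hence the bound $\SF_j(z^0_j,\hat z^0_j)\le\varepsilon_j$, which you need at \emph{every} index to conclude $\SF(z^0,\hat z^0)\le\varepsilon$, is simply unavailable at those indices, and your constructed $z^0$ is left undefined there. The step can be repaired for the specific abstractions of Definition~\ref{smm}, where every abstract initial state admits the concrete match $x^0=\hat x^0$ with $\mathcal{V}=0\le\varepsilon$ (secrecy of these matched components is harmless, since non-secrecy of the product state hinges only on the $i^*$-th component); but from Definition~\ref{sf} alone --- which is how both you and the theorem statement frame the hypothesis --- the argument requires either such a matching property as an additional assumption or restricting 1(b) to states in $\prod_i(\hat X_{0_i}\setminus\hat X_{s_i})$.
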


\begin{proof}
	First, we show that condition 1(a) in Definition \ref{sfg} holds. Consider any $z^0\in X_0\cap  X_{S}$. For any system  $T(\Sigma_i)$ and the corresponding $\varepsilon_i$-InitSOPSF $\SF_i$, from the definition of $\SF_i$, we have $\forall z^0_i \in X_{0_i} \cap X_{s_i}$, $\exists \hat z^0_i \in \hat{{X}}_{0_i} \cap \hat{{X}}_{s_i}$ s.t. 
	$\SF_i(z^0_i,\hat z^0_i) \leq \varepsilon_i$. Then, from the definition of ${\SF}$ in $\eqref{defVinit}$ we get ${\SF}(z^0, \hat z^0) \leq \varepsilon$, where $\hat z^0\in\hat{{X}}_0 \cap \hat{{X}}_{S}$. Thus, condition 1(a) in Definition \ref{sfg} holds. 
	Condition 1(b) can be proved in the same way, thus is omitted. 
	Now, we show that condition 2 in Definition \ref{sfg} holds for some $\mathcal{K}_{\infty}$ function ${\alpha}$.	
	Consider any $z=\intcc{z_1;\ldots;z_N}\in X$ and $\hat z=\intcc{\hat z_1;\ldots;\hat z_N}\in\hat{{X}}$. Then, using condition 2 in Definition \ref{sf}, one gets
	\begin{align*}\notag
		&\Vert  \op(z) - \hat{\op}(\hat{z})\Vert =\max\limits_{i}\{\Vert \op_{ii}(z_i) - \hat{\op}_{ii}(\hat{z}_i)\Vert\}\\
		&\leq\max\limits_{i}\{\Vert \op_{i}(z_i) - \hat{\op}_{i}(\hat{z}_i)\Vert\}
		\leq\max\limits_{i}\{\alpha^{-1}_i \circ  \SF_i(z_{i},\hat{z}_{i})\}\leq \hat{\alpha}\circ\max\limits_{i}\{ \frac{\varepsilon}{\varepsilon_i} \SF_i(z_{i},\hat{z}_{i}) \},
	\end{align*}where $\hat{\alpha}=\max\limits_{i}\{\alpha^{-1}_i\}$. By defining ${\alpha}= \hat{\alpha}^{-1}$, one obtains
	\begin{align*}\notag
		{\alpha}(\Vert \op(z)-\hat{\op}(\hat{z})\Vert )\leq {\SF}(z,\hat{z}),
	\end{align*}which satisfies condition 2 in Definition \ref{sfg}.
	Now, we show that condition 3 holds. Let us consider any 
	$z\in X$ and $\hat z\in\hat{{X}}$ such that ${\SF}(z,\hat z) \leq \varepsilon$. It can be seen that from the structure of ${\SF}$ in \eqref{defVinit}, we get $\SF_i(z_{i},\hat{z}_{i}) \leq \varepsilon_i$, $\forall i \in[1;N]$. For each pair of systems  $T(\Sigma_i)$ and $T(\hat \Sigma_i)$, the internal inputs satisfy the chain of inequality
	\begin{align*}
		&\Vert w_i- \hat{w}_i\Vert = \max\limits_{j \in \mathcal{N}_i}\{\Vert w_{ij}- \hat{w}_{ij}\Vert \} 
		= \max\limits_{j \in \mathcal{N}_i}\{\Vert y_{ji}-\hat{y}_{ji}+\hat{y}_{ji}-\hat{w}_{ij}\Vert \} \leq \max\limits_{j \in \mathcal{N}_i}\{\Vert y_{ji}-\hat{y}_{ji}\Vert + \phi_{ij}\} \\
		&\leq\max\limits_{j \in \mathcal{N}_i}\{\Vert \mathcal{H}_{j}(z_j)-\mathcal{\hat H}_{j}(\hat{z}_j)\Vert+ \phi_{ij} \}\leq\max\limits_{j \in \mathcal{N}_i}\{\alpha^{-1}_{j}\circ \SF_j(z_{j},\hat{z}_{j})+ \phi_{ij}\}
		\leq\max\limits_{j \in \mathcal{N}_i}\{\alpha^{-1}_{j}\circ \varepsilon_j+ \phi_{ij}\}.	
	\end{align*}Using \eqref{compoquaninit}, one has $\Vert w_i- \hat{w}_i\Vert \leq \vartheta_i$.
	Therefore, by Definition \ref{sf} for each pair of systems $T(\Sigma_i)$ and $T(\hat \Sigma_i)$, one has $\forall u_i \in  U_i$,  $\forall z^+_i \in \TF_i(z_i,u_i,w_i)$, there exists $\hat{u}_i\in \hat{ U}_i$ and $\hat z^+_i \in \hat{\TF}_i(\hat z_i,\hat u_i,\hat w_i)$ such that  $\SF_i(z^+_i,\hat z^+_i) \leq \varepsilon_i$. 
	As a result, we get $\forall u=\intcc{u_{1};\ldots;u_{N}}\in U$, $\forall z^+ \in \TF(z,u)$, there exists $\hat u=\intcc{\hat u_{1};\ldots;\hat u_{N}}\in\hat{{U}}$ and $\hat{z}^+ \in \hat{\TF}(\hat{z},\hat{u})$ such that ${\SF}(z^+, \hat{z}^+) \Let\max\limits_{i}\{ \frac{\varepsilon}{\varepsilon_i} \SF_i(z_{i}^+,\hat{z}_{i}^+) \} \leq \varepsilon$. Therefore, condition 3(a) in Definition \ref{sfg} is satisfied with $\varepsilon = \max\limits_{i} \varepsilon_i$. The proof of condition 3(b) uses the same reasoning as that of 3(a) and is omitted. Therefore, we conclude that $\mathcal{S}$ is a $\varepsilon$-InitSOPSF from  $T(\Sigma)=\mathcal{I}(T(\Sigma_1),\ldots,T(\Sigma_{N}))$ to $T(\hat{\Sigma})=\hat {\mathcal{I}}(T(\hat{\Sigma}_1),$ $\ldots,T(\hat{\Sigma}_{N}))$.
\end{proof}

Till here, we have seen that one can construct an abstraction of a network of switched systems by interconnecting local abstractions of the subsystems. The overall InitSOPSF between two networks is established by composing local InitSOPSFs as well. This abstract network safisties Definition \ref{sfg}, which allows us to check approximate opacity property over the simpler abstract network and carry the results back to the concrete network using the results provided in Corollary \ref{thm:InitSOP}.

Next, we are going to impose certain conditions on the dynamics of the subsystems, such that one can construct proper abstractions for all of the subsystems together with the corresponding local InitSOPSFs.

\section{Construction of Finite Abstractions}\label{1:IV}
In this section, we are going to explore how to construct finite abstractions together with local InitSOPSFs for subsystems. The dt-SS $\Sigma=(\mathbb X,\mathbb X_0,\mathbb X_s, P,$ $\mathbb W,F,\mathbb Y,h)$ are assumed to be infinite and deterministic. Moreover, we assume the output map $h$ satisfies the following general Lipschitz assumption: there exists an $\ell\in\KK$ such that: $\Vert h(x)-h(y)\Vert\leq \ell(\Vert x-y\Vert)$ for all $x,y\in \mathbb X$. Here, we also use $\Sigma_{p}$ to denote a dt-SS $\Sigma$ in \eqref{eq:2} with constant switching signal $\mathsf{p}(k)=p,~\forall k\in \N$.


\subsection{Construction of Local Finite Abstractions}
Note that throughout this subsection, we are mainly talking about switched subsystems rather than the overall network. However, for the sake of better readability, we omit index $i$ of subsystems throughout the text in this subsection, e.g., we write $T(\Sigma)$ instead of $T(\Sigma_i)$.

Here, we establish an $\varepsilon$-InitSOPSF between $T(\Sigma)$ and its finite abstraction by assuming that, for all $p \in P$, $\Sigma_p$ is incrementally input-to-state stable ($\delta$-ISS) \cite{ruffer} as defined next.
\begin{definition}\label{def:SFD1} 
	System $\Sigma_{p}$  is $\delta$-ISS if there exist functions $ V_p:\mathbb X\times \mathbb X \to \mathbb{R}_{\geq0} $, $\underline{\alpha}_p, \overline{\alpha}_p, \rho_{p} \in \mathcal{K}_{\infty}$, and constant $0<\kappa_p<1$, such that for all $x,\hat x\in \mathbb{X}$, and for all $w,\hat w\in \mathbb{W}$  
	\begin{align}\label{e:SFC11}
		\underline{\alpha}_p (\Vert x&-\hat{x}\Vert ) \leq V_p(x,\hat{x})\leq \overline{\alpha}_p (\Vert x-\hat{x}\Vert ),\\\label{e:SFC22}
		V_p(f_p(x,w),&f_p(\hat x,\hat w))
		\leq \kappa_p V_p(x,\hat{x})+\rho_{p}(\Vert w- \hat{w}\Vert ).
	\end{align}
\end{definition}

\begin{remark}\label{commonlya}
	We say that $V_p$, $\forall p\in P$, are multiple $\delta$-ISS Lyapunov functions for subsystem $\Sigma$ if it satisfies \eqref{e:SFC11} and \eqref{e:SFC22}. Moreover, if $V_p=V_{p^+}, \forall p,p^+\in P$, we omit the index $p$ in \eqref{e:SFC11}, \eqref{e:SFC22}, and say that $V$ is a common $\delta$-ISS Lyapunov function for system $\Sigma$. We refer interested readers to \cite{liberzon} for more details on common and multiple Lyapunov functions for switched systems.  
\end{remark}
Now, we show how to construct a local finite abstraction $T(\hat{\Sigma})$ of transition system $T(\Sigma)$ associated to the switched subsystem $\Sigma$ in which $\Sigma_{p}$  is $\delta$-ISS.
\begin{figure}[ht!]
	\centering
	\includegraphics[width=0.5\textwidth]{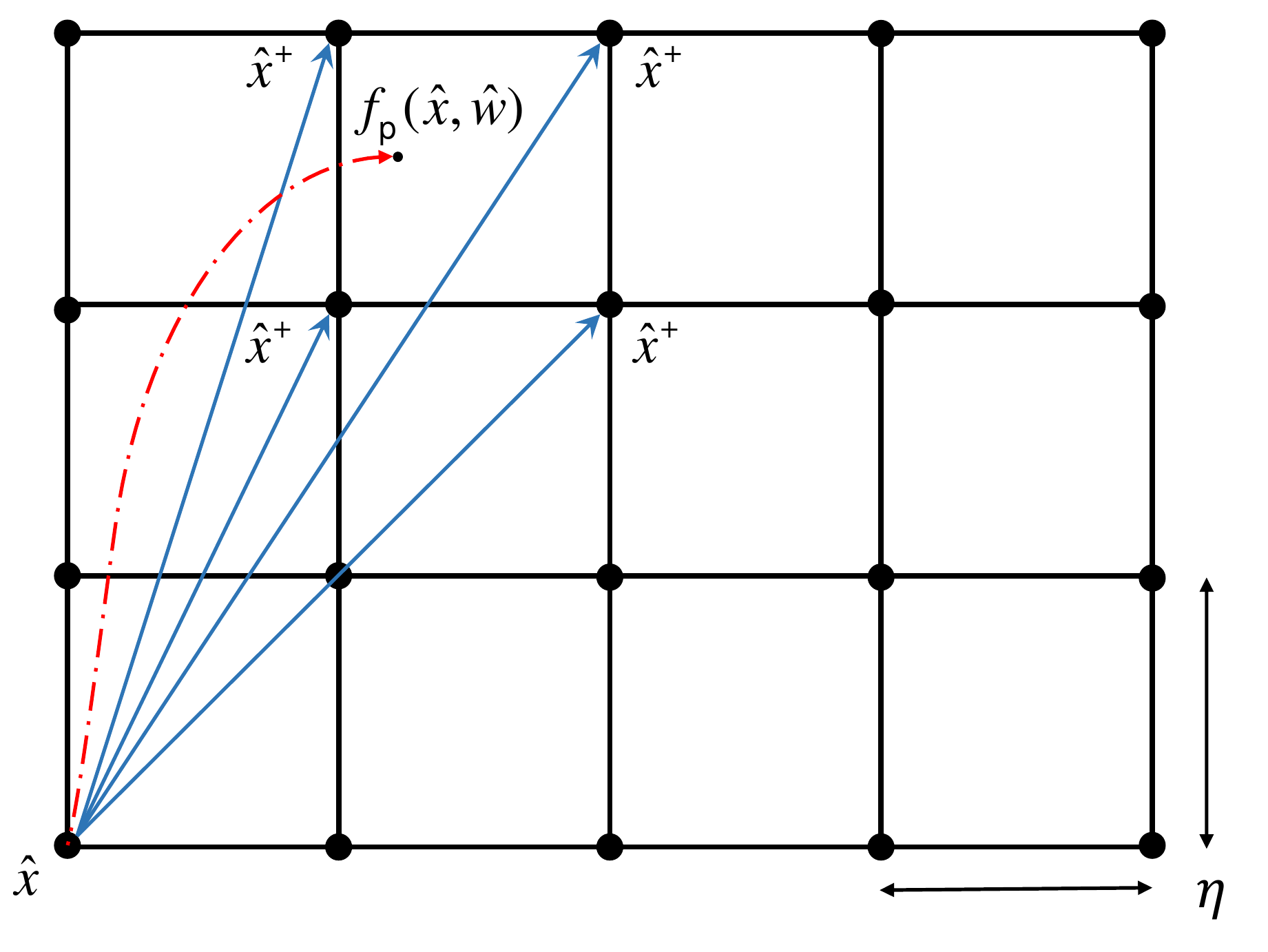}
	\caption{Construction of local finite abstractions of switched subsystems.}
	\label{abs}
\end{figure}

\begin{definition}\label{smm} Consider a transition system $T(\Sigma)=(X,X_0,X_s,U,W,\TF,Y,\op)$, associated to the switched subsystem $\Sigma=(\mathbb X,\mathbb X_0,\mathbb X_s,P,\mathbb W, F,\mathbb Y,h)$, where $\mathbb X,\mathbb W$ are assumed to be finite unions of boxes. Let $\Sigma_{p}$ be $\delta$-ISS as in Definition \ref{def:SFD1}. Then one can construct a finite abstraction $T(\hat{\Sigma})=(\hat{X},\hat X_0,\hat X_s,\hat{U},\hat{W},\hat{\TF},\hat{Y},\hat{\op})$ where:
	\begin{itemize}
		\item $\hat{X}=\hat{\mathbb{X}}\times P\times \{0,\dots,k_d-1\}$, where $\hat{\mathbb{X}}=[\mathbb{X}]_{\eta}$ and $0 < \eta \leq \tup{min} \{span(\mathbb X_s)$, $span(\mathbb{X} \setminus \mathbb X_s)\}$ is the state set quantization parameter;
		\item  $\hat X_0=\hat{\mathbb X}_0\times  P \times \{0\}$, where $\hat{\mathbb X}_0 =[\mathbb{X}_0]_{\eta}$; 
		\item  $\hat X_s=\hat{\mathbb X}_s\times P\times  \{0,\cdots,k_d-1\}$, where $\hat{\mathbb X}_s = [\mathbb X_s]_{\eta}$; 
		\item $\hat{U}=U=P$;
		\item $(\hat{x}^+,p^+,l^+)\in \hat{\TF}((\hat{x},p,l),\hat{u},\hat{w})$ if and only if $\Vert f_p(\hat{x},\hat{w})-\hat{x}^+\Vert\leq \eta$, $\hat{u}=p$ and the following scenarios hold:
		\begin{itemize}
			\item $l<k_d-1$, $p^+=p$ and $l^+=l+1$;
			\item $l=k_d-1$, $p^+=p$ and $l^+=k_d-1$;
			\item $l=k_d-1$, $p^+\neq p$ and $l^+=0$;
		\end{itemize}
		\item   $\hat{Y} = \{{\op}(\hat{x},p,l)|(\hat{x},p,l)\in\hat{X}\}$;
		\item $\hat{\op}:\hat{X}\rightarrow \hat{Y}$, defined as $\hat{\op}(\hat{x},p,l)={\op}(\hat{x},p,l)=h(\hat{x})$;
		\item $\hat{W}=[\mathbb{W}]_{{\phi}}$,  where $\phi$, satisfying $0 \!<\!\Vert \phi \Vert\!\leq\! span(\mathbb W)$, is the internal input set quantization parameter.
	\end{itemize} 
\end{definition}
An illustration of the computation of the transitions of $T(\hat{\Sigma})$ is shown in Figure~\ref{abs}.

Note that in the case when the concrete switched subsystem $\Sigma$ admits a common $\delta$-ISS Lyapunov function as in Remark \ref{commonlya}, Definition \ref{smm} boils down to the following.
\begin{definition}\label{smm1}
	Consider a transition system $T(\Sigma)=(X,X_0,X_s,U,W,\TF,Y,\op)$, associated to the switched subsystem $\Sigma=(\mathbb X,\mathbb X_0,\mathbb X_s,P,\mathbb W, F,\mathbb Y,h)$, where $\mathbb X,\mathbb W$ are assumed to be finite unions of boxes. 
	Suppose $\Sigma$ admits a common $\delta$-ISS Lyapunov function as in Remark \ref{commonlya}. Then one can construct a finite abstraction $T(\hat{\Sigma})=(\hat{X},\hat X_0,\hat X_s,\hat{U},\hat{W},\hat{\TF},\hat{Y},\hat{\op})$ where:
	\begin{itemize}
		\item $\hat{X}=[\mathbb{X}]_{\eta}$, where $0 < \eta \leq \tup{min} \{span(\mathbb X_s)$, $span(\mathbb{X} \setminus \mathbb X_s)\}$ is the state set quantization parameter;
		\item  $\hat X_0=[\mathbb{X}_0]_{\eta}$; 
		\item  $\hat X_s = [\mathbb X_s]_{\eta}$; 
		\item $\hat{U} =P$;
		\item $\hat{x}^+\in \hat{\TF}(\hat{x},\hat{u},\hat{w})$ if and only if $\Vert f_{\hat{u}}(\hat{x},\hat{w})-\hat{x}^+\Vert\leq \eta$;
		\item $\hat{Y} = \{h(\hat{x})|\hat{x}\in\hat{X}\}$;
		\item $\hat{\op}:\hat{X}\rightarrow \hat{Y}$, defined as $\hat{\op}(\hat{x})=h(\hat{x})$;
		\item $\hat{W}=[\mathbb{W}]_{{\phi}}$,  where $\phi$, satisfying $0 \!<\!\Vert \phi \Vert\!\leq\! span(\mathbb W)$, is the internal input set quantization parameter.
	\end{itemize} 
\end{definition}

In order to construct a local $\varepsilon$-InitSOPSF from $T(\Sigma)$ to $T(\hat{\Sigma})$, we raise the following assumptions on functions $V_p$ appeared in Definition \ref{def:SFD1}, which are used to prove some of the main results later. 
\begin{assumption}\label{ass1} 
	There exists $\mu \geq 1$ such that
	\begin{align}\label{mue}
		\forall x,y \in \mathbb{X},~~ \forall p,q \in P,~~ V_p(x,y)\leq \mu  V_{q}(x,y). 
	\end{align}
\end{assumption}
Assumption \ref{ass1} is an incremental version of a similar assumption in \cite{VU} that is used to prove input-to-state stability of switched systems under constrained switching signals.
\begin{assumption}\label{ass2} 
	For all $p\in P$, there exists a $\mathcal{K}_{\infty}$ function $\gamma_p$ such that
	\begin{align}\label{tinq} 
		\forall x,y,z \in \mathbb{X},~~V_p(x,y)\leq V_p(x,z)+\gamma_p(\Vert y-z\Vert).
	\end{align}
\end{assumption}
Assumption \ref{ass2} is non-restrictive as shown in \cite{zamani2014symbolic} provided that one is interested to work on a compact subset of $\mathbb{X}$.   

Now, we establish the relation between $T(\Sigma)$ and $T(\hat{\Sigma})$, introduced above, via the notion of local $\varepsilon$-InitSOPSF as in Definition \ref{sf}.
\begin{theorem}\label{thm:2}
	Consider a dt-SS $\Sigma=(\mathbb X,\mathbb X_0,\mathbb X_s,P,\mathbb W,F,\mathbb Y,h)$  with its equivalent transition system  $T(\Sigma)=(X,X_{0},X_{S},U,W,\TF,Y,\mathcal H)$. Suppose $\Sigma_{p}$ is $\delta$-ISS as in Definition \ref{def:SFD1}, 
	with a function $V_p$ equipped with functions $\underline{\alpha}_p, \overline{\alpha}_p, \rho_{p}$ and constant $\kappa_p$,		
	and 
	Assumptions \ref{ass1} and \ref{ass2} hold. Let $\epsilon>1$. For any design parameters $\varepsilon, \vartheta \in \mathbb R_{\ge 0}$, let $T(\hat{\Sigma})$ be a finite abstraction of $T(\Sigma)$ constructed as in Definition \ref{smm}	with any quantization parameter $\eta$ satisfying 
	\begin{align} \label{secquantinit}	
		\eta \leq \min\{\hat{\gamma}^{-1}((1-\kappa)\varepsilon-\rho(\vartheta)),  \overline{\alpha}^{-1}(\varepsilon)\},
	\end{align}
	where $\kappa =\max\limits_{p\in P}\left\{\kappa^{\frac{\epsilon-1}{\epsilon}}_{p}\right\}$, $\rho=\max\limits_{p\in P}\left\{{\kappa^{-\frac{k_d}{\epsilon}}_p}{\rho_{p}}\right\}$, $\hat{\gamma}=\max\limits_{p\in P}\left\{{\kappa^{-\frac{k_d}{\epsilon}}_p}{\gamma_{p}}\right\}$, $\overline{\alpha}=\max\limits_{p\in P}\left\{{\kappa^{-\frac{l}{\epsilon}}_p}{\overline{\alpha}_p}\right\}$.
	If, $\forall p \in P, ~k_d\geq \epsilon \frac{\ln (\mu)}{\ln (\frac{1}{\kappa_p})}+1$, then function $\mathcal{V}$ defined as 
	\begin{align}\label{sm}
		\mathcal{V}((x,p,l),(\hat{x},p,l))\Let V_p(x,\hat{x})\kappa^{\frac{-l}{\epsilon}}_p,
	\end{align}
	is a local $\varepsilon$-InitSOPSF from $T(\Sigma)$ to $T(\hat{\Sigma})$.
\end{theorem}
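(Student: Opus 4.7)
My plan is to verify the three requirements of Definition \ref{sf} directly for $\mathcal{V}((x,p,l),(\hat x,p,l)) = V_p(x,\hat x)\kappa_p^{-l/\epsilon}$, combining the incremental Lyapunov apparatus of Definition \ref{def:SFD1} and Assumptions \ref{ass1}--\ref{ass2} with the quantization structure of Definition \ref{smm}. For condition 1, the bound $\eta \leq \min\{\mathrm{span}(\mathbb{X}_s), \mathrm{span}(\mathbb{X}\setminus \mathbb{X}_s)\}$ guarantees that the uniform lattice $[\mathbb{X}]_\eta$ resolves the secret/non-secret partition of $\mathbb{X}_0$. Given a concrete secret-initial state $z^0 = (x^0,p,0)$, I would select a lattice representative $\hat x^0 \in [\mathbb{X}_0]_\eta\cap[\mathbb{X}_s]_\eta$ with $\|x^0-\hat x^0\|\leq \eta$, giving $\mathcal{V}(z^0,\hat z^0) = V_p(x^0,\hat x^0) \leq \overline\alpha_p(\eta) \leq \overline\alpha(\eta) \leq \varepsilon$, the last step from the second bound in \eqref{secquantinit}; case 1(b) is easier, since any non-secret abstract lattice point lies in $\mathbb{X}_0\setminus\mathbb{X}_s$ itself, so $x^0 := \hat x^0$ yields $\mathcal{V} = 0$. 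Condition 2 follows upon setting $\alpha := \min_{p\in P}\underline\alpha_p\circ\ell^{-1}$: since $\mathcal{V}(z,\hat z) \geq V_p(x,\hat x) \geq \underline\alpha_p(\|x-\hat x\|)$ (using $\kappa_p^{-l/\epsilon}\geq 1$) and $\|h(x)-h(\hat x)\| \leq \ell(\|x-\hat x\|)$, the inequality is immediate.

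The bulk of the proof lies in condition 3. Fix $(z,\hat z)$ with $\mathcal{V}(z,\hat z)\leq\varepsilon$ and $w,\hat w$ with $\|w-\hat w\|\leq\vartheta$, choose $\hat u = u = p$, and pick $\hat x^+\in [\mathbb{X}]_\eta$ within $\eta$ of $f_p(\hat x,\hat w)$. Then Assumption \ref{ass2} combined with \eqref{e:SFC22} yields
\begin{equation*}
V_p(x^+,\hat x^+) \leq \kappa_p V_p(x,\hat x) + \rho_p(\|w-\hat w\|) + \gamma_p(\eta).
\end{equation*}
I would then handle the three transition regimes of Definition \ref{tsm}. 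In the two non-switching regimes ($p^+=p$, $l^+ = l+1$ or $l^+ = l = k_d-1$), multiplying through by $\kappa_p^{-l^+/\epsilon}$ and substituting $V_p(x,\hat x) = \mathcal{V}(z,\hat z)\kappa_p^{l/\epsilon}$ produces a state-term coefficient $\kappa_p^{1-(l^+-l)/\epsilon} \in \{\kappa_p,\,\kappa_p^{(\epsilon-1)/\epsilon}\}$, both of which are bounded above by $\kappa$ since $\kappa_p<1$; meanwhile $l^+\leq k_d$ gives $\kappa_p^{-l^+/\epsilon}\leq \kappa_p^{-k_d/\epsilon}$, so the input and grid contributions are dominated by $\rho(\|w-\hat w\|)$ and $\hat\gamma(\eta)$ respectively.

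The delicate case, and the main obstacle, is the switching transition $l=k_d-1$, $p^+\neq p$, $l^+=0$, where the Lyapunov function itself changes and Assumption \ref{ass1} injects a factor $\mu$: $\mathcal{V}(z^+,\hat z^+) = V_{p^+}(x^+,\hat x^+) \leq \mu V_p(x^+,\hat x^+)$. The dwell-time bound $k_d\geq \epsilon\ln(\mu)/\ln(1/\kappa_p)+1$ rewrites exactly as $\mu\kappa_p^{(k_d-1)/\epsilon}\leq 1$, and this single inequality simultaneously absorbs the switching penalty on every side: on the state side, $\mu\kappa_p\cdot\kappa_p^{(k_d-1)/\epsilon} \leq \kappa_p \leq \kappa_p^{(\epsilon-1)/\epsilon}\leq \kappa$, while on the input and grid sides, $\mu\leq\kappa_p^{-(k_d-1)/\epsilon}\leq\kappa_p^{-k_d/\epsilon}$, so the corresponding terms are again dominated by $\rho$ and $\hat\gamma$. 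Hence the uniform recursion $\mathcal{V}(z^+,\hat z^+) \leq \kappa\mathcal{V}(z,\hat z) + \rho(\|w-\hat w\|) + \hat\gamma(\eta)$ holds across all three regimes; combined with the first bound in \eqref{secquantinit}, this delivers $\mathcal{V}(z^+,\hat z^+) \leq \kappa\varepsilon + \rho(\vartheta) + \hat\gamma(\eta) \leq \varepsilon$, proving 3(a). Condition 3(b) follows by the same Lyapunov calculation since $\Sigma$ is deterministic: any abstract successor $\hat z^+$ is matched by $u = \hat u = p$ and the concrete $x^+ = f_p(x,w)$, to which the identical bound applies, so $\mathcal{V}$ is a local $\varepsilon$-InitSOPSF.
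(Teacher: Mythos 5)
Your proposal is correct and follows essentially the same route as the paper's own proof: the same representative-point argument for condition 1 (with $x^0=\hat x^0$ for 1(b)), the same Lipschitz/$\underline\alpha_p$ bound for condition 2 (your $\alpha=\min_{p\in P}\underline\alpha_p\circ\ell^{-1}$ is exactly the inverse of the paper's $\hat\alpha=\max_{p\in P}\{\ell\circ\underline\alpha_p^{-1}\}$), and the same three-regime case analysis combining \eqref{e:SFC22} with Assumption \ref{ass2} into $V_p(x^+,\hat x^+)\leq\kappa_p V_p(x,\hat x)+\rho_p(\Vert w-\hat w\Vert)+\gamma_p(\eta)$, with the dwell-time condition absorbing $\mu$ via $\mu\kappa_p^{(k_d-1)/\epsilon}\leq 1$ and determinism handling 3(b). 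No gaps; your identification of the single inequality $\mu\kappa_p^{(k_d-1)/\epsilon}\leq 1$ as simultaneously controlling the state, input, and grid terms in the switching case matches the paper's computation exactly.
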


\begin{proof} 
	We start by proving condition 1 in Definition \ref{sf}. Consider any initial and secret state $(x^0,p^0,0) \in {X}_0 \cap {X}_s$ in $T(\Sigma)$. From Definition \ref{smm}, for every $(x^0,p^0,0)\in {X}_0 \cap {X}_s$, there always exists $(\hat{x}^0,p^0,0) \in  \hat{X}_0 \cap \hat {X}_s$ such that $\Vert x^0-\hat x^0\Vert\leq\eta$.
	Hence, using \eqref{e:SFC11}, there exists $(\hat{x}^0,p^0,0) \in \hat{X}_0 \cap \hat{X}_s$ with $\mathcal{V}((x^0,p^0,0),(\hat{x}^0,p^0,0)) \leq \frac{\overline{\alpha}_p (\Vert x^0-\hat x^0\Vert )}{{\kappa^{\frac{l}{\epsilon}}_p}} \leq \frac{\overline{\alpha}_p (\eta) }{{\kappa^{\frac{l}{\epsilon}}_p}}$, and condition 1(a) is satisfied with $\overline{\alpha}=\max\limits_{p\in P}\left\{{\kappa^{-\frac{l}{\epsilon}}_p}{\overline{\alpha}_p}\right\}$ and $\overline{\alpha}(\eta) \leq \varepsilon$ by \eqref{secquantinit}. For every $(\hat{x}^0,p^0,0) \in \hat {X}_0 \setminus \hat {X}_s$, by choosing $x^0 = \hat x^0$ with $(x^0,p^0,0)$ also being inside ${X}_0 \setminus {X}_s$, we get $\mathcal{V}((x^0,p^0,0),(\hat{x}^0,p^0,0)) = 0 \leq \varepsilon$. Hence, condition 1(b) in Definition \ref{sf} holds as well.
	
	Next, we show condition 2 in Definition \ref{sf} holds.	
	Given the Lipschitz assumption on $h$ and since, $\forall p\in P$, $\Sigma_p$ is $\delta$-ISS, from \eqref{e:SFC11}, $\forall (x,p,l)\in X$ and $ \forall (\hat{x},p,l) \in \hat{X}
	$, we have 
	\begin{align*}
		&\Vert \op(x,p,l)-\hat{\op}(\hat{x},p,l)\Vert=\Vert h(x)-\hat{h}(\hat{x})\Vert
		\leq \ell(\Vert x-\hat{x}\Vert)\leq\ell\circ\underline{\alpha}^{-1}_p(V_p(x,\hat{x}))
		=\ell\circ\underline{\alpha}^{-1}_p\left({\kappa^{\frac{l}{\epsilon}}_p}\mathcal{V}((x,p,l),(\hat{x},p,l))\right)\\
		&\leq\ell\circ\underline{\alpha}^{-1}_p\left(\mathcal{V}((x,p,l),(\hat{x},p,l))\right)
		\leq\hat{\alpha}\left(\mathcal{V}((x,p,l),(\hat{x},p,l))\right),
	\end{align*}	
	where $\hat{\alpha}=\max\limits_{p\in P}\{\ell\circ\underline{\alpha}^{-1}_p\}$.  
	By defining $\alpha=\hat{\alpha}^{-1}$, one obtains
	\begin{align}\notag
		\alpha (\Vert \op(x,p,l)-\hat{\op}(\hat{x},p,l)\Vert )\leq \mathcal{V}((x,p,l),(\hat{x},p,l)),
	\end{align}satisfying condition 2. 
	Now we show condition 3 in Definition \ref{sf}. From \eqref{tinq}, $\forall x\in \mathbb{X}, \forall \hat{x} \in \mathbb{\hat{X}}$, $\forall w \in {W},\forall \hat{w} \in {\hat{{W}}}$, we have 
	\begin{align*}
		V_p(f_p(x,w),\hat{x}^+)\leq V_p(f_p(x,w),f_p(\hat{x},\hat{w}))+\gamma_p(\Vert \hat{x}^+-f_p(\hat{x},\hat{w})\Vert),
	\end{align*}
	for any $\hat{x}^+$ such that $(\hat{x}^+,p^+,l^+)\in \hat{\TF}((\hat{x},p,l),\hat{u},\hat{w})$.
	Now, from Definition \ref{smm}, the above inequality reduces to
	\begin{align*}
		V_p(f_p(x,w),\hat{x}^+)\leq V_p(f_p(x,w),f_p(\hat{x},\hat{w}))+\gamma_p(\eta).
	\end{align*}
	Note that by \eqref{e:SFC22}, one gets 
	\begin{align*}
		V_p(f_p(x,w),f_p(\hat x,\hat w))
		\leq \kappa_p V_p(x,\hat{x})+\rho_{p}(\Vert w- \hat{w}\Vert ).
	\end{align*}	
	Hence, $\forall x\in \mathbb{X}, \forall \hat{x} \in \mathbb{\hat{X}}$, $\forall w \in {W},\forall \hat{w} \in {\hat{{W}}}$, one obtains
	\begin{align}\label{fe}
		V_p(f_p(x,w),\hat{x}^+)
		\leq \kappa_p V_p(x,\hat{x})+\rho_{p}(\Vert w- \hat{w}\Vert )+\gamma_p(\eta),
	\end{align}
	for any $\hat{x}^+$ such that $(\hat{x}^+,p^+,l^+)\in \hat{\TF}((\hat{x},p,l),\hat{u},\hat{w})$.
	Now, in order to show function $\mathcal{V}$ defined in \eqref{sm} satisfies condition 3 in Definition \ref{sf}, we consider the different scenarios in Definition \ref{smm}: 
	\begin{itemize}
		\item$l<k_d-1$, $p^+=p$ and $l^+=l+1$, using \eqref{fe} and $k_d>l+1$, we have
		\begin{small}
			\begin{align*}
				\mathcal{V}&((x^+,p^+,l^+),(\hat{x}^+,p^+,l^+))=\frac{V_{p^+}(x^+,\hat{x}^+)}{\kappa^{\frac{l^+}{\epsilon}}_p}=\frac{V_p(f_p(x,w),\hat{x}^+)}{\kappa^{\frac{l+1}{\epsilon}}_p}\\
				&\leq\frac{\kappa_pV_p(x,\hat{x})+\rho_{p}(\Vert w- \hat{w}\Vert )+\gamma_p(\eta)}{\kappa^{\frac{l+1}{\epsilon}}_p}=\frac{\kappa_p}{\kappa^{\frac{1}{\epsilon}}_p}\frac{V_p(x,\hat{x})}{\kappa^{\frac{l}{\epsilon}}_p}+\frac{\rho_{p}(\Vert w- \hat{w}\Vert )+\gamma_p(\eta)}{\kappa^{\frac{l+1}{\epsilon}}_p}\\
				&\leq\kappa^{\frac{\epsilon-1}{\epsilon}}_p\mathcal{V}((x,p,l),(\hat{x},p,l))+\frac{\rho_{p}(\Vert w- \hat{w}\Vert) +\gamma_p(\eta)}{\kappa^{\frac{k_d}{\epsilon}}_p}.
			\end{align*}
		\end{small}
		\item $l=k_d-1$, $p^+=p$ and $l^+=k_d-1$, using \eqref{fe} and $\frac{\epsilon-1}{\epsilon}<1$, one gets
		\begin{small}
			\begin{align*}
				\mathcal{V}&((x^+,p^+,l^+),(\hat{x}^+,p^+,l^+))=\frac{V_{p^+}(x^+,\hat{x}^+)}{\kappa^{\frac{l^+}{\epsilon}}_p}=\frac{V_p(f_p(x,w),\hat{x}^+)}{\kappa^{\frac{l}{\epsilon}}_p}\\
				&\leq\frac{\kappa_pV_p(x,\hat{x})+\rho_{p}(\Vert w- \hat{w}\Vert )+\gamma_p(\eta)}{\kappa^{\frac{l}{\epsilon}}_p}=\kappa_p\frac{V_p(x,\hat{x})}{\kappa^{\frac{l}{\epsilon}}_p}+\frac{\rho_{p}(\Vert w- \hat{w}\Vert) +\gamma_p(\eta)}{\kappa^{\frac{l}{\epsilon}}_p}\\
				&\leq\kappa^{\frac{\epsilon-1}{\epsilon}}_p\mathcal{V}((x,p,l),(\hat{x},p,l))+\frac{\rho_{p}(\Vert w- \hat{w}\Vert) +\gamma_p(\eta)}{\kappa^{\frac{k_d}{\epsilon}}_p}.
			\end{align*}
		\end{small}
		\item $l=k_d-1$, $p^+\neq p$ and $l^+=0$, using \eqref{fe}, $\mu{\kappa^{\frac{k_d-1}{\epsilon}}_{p}}\leq1$, and $\frac{\epsilon-1}{\epsilon}<1$, one has
		\begin{small}
			\begin{align*}
				\mathcal{V}&((x^+,p^+,l^+),(\hat{x}^+,p^+,l^+))=\frac{V_{p^+}(x^+,\hat{x}^+)}{\kappa^{\frac{l^+}{\epsilon}}_{p^+}}\leq\mu V_{p}(f_{p}(x,w),\hat{x}^+)\\
				&\leq\frac{\mu\kappa^{\frac{k_d-1}{\epsilon}}_p\left(\kappa_{p}V_{p}(x,\hat{x})+\rho_{p}(\Vert w- \hat{w}\Vert )+\gamma_{p}(\eta)\right)}{\kappa^{\frac{k_d-1}{\epsilon}}_p}=\kappa_p\frac{V_p(x,\hat{x})}{\kappa^{\frac{l}{\epsilon}}_p}+\frac{\rho_{p}(\Vert w- \hat{w}\Vert) +\gamma_p(\eta)}{\kappa^{\frac{l}{\epsilon}}_p}\\
				&\leq\kappa^{\frac{\epsilon-1}{\epsilon}}_p\mathcal{V}((x,p,l),(\hat{x},p,l))+\frac{\rho_{p}(\Vert w- \hat{w}\Vert) +\gamma_p(\eta)}{\kappa^{\frac{k_d}{\epsilon}}_p}.
			\end{align*}
		\end{small}
	\end{itemize}
	Note that $\forall p\in P,\mu{\kappa^{\frac{k_d-1}{\epsilon}}_{p}}\leq1$, since $\forall p \in P, k_d\geq \epsilon \frac{\ln (\mu)}{\ln (\frac{1}{\kappa_p})}+1$.
	Hence, $\forall (x,p,l)\in X$, $\forall (\hat{x},p,l) \in \hat{X}$, $\forall w \in W$, $\forall \hat{w} \in {\hat{W}}$, one gets
	\begin{align} \label{fe1}
		\mathcal{V}((x^+,p^+,l^+),(\hat{x}^+,p^+,l^+))\leq\kappa\mathcal{V}((x,p,l),(\hat{x},p,l))+\rho(\Vert w- \hat{w}\Vert)+\hat{\gamma}(\eta).
	\end{align}Now, we show the condition 3(a) in Definition \ref{sf} holds. Let us consider any pair of states $(x,p,l)\in X$, $(\hat{x},p,l) \in \hat{X}$, satisfying $\mathcal{V}((x,p,l),(\hat{x},p,l)) \leq \varepsilon$, and any $w\in W$, $\hat w \in \hat W$ such that $\Vert {w}-\hat{w}\Vert \leq \vartheta$. Combining \eqref{fe1} with \eqref{secquantinit} for any $(x^+,p^+,l^+)\in \TF((x,p,l),u,w)$ and any 
	$(\hat{x}^+,p^+,l^+)\in \hat{\TF}((\hat{x},p,l),\hat{u},\hat{w})$ with $\hat{u} = u$, one obtains: 
	\begin{align}\label{sos}
		\mathcal{V}((x^+,p^+,l^+),(\hat{x}^+,p^+,l^+)) \leq \kappa\varepsilon+\rho(\vartheta)+\hat{\gamma}(\hat{\gamma}^{-1}((1-\kappa)\varepsilon-\rho(\vartheta))) = \varepsilon,
	\end{align}which shows that condition 3(a) is satisfied. Similarly, for any  $(\hat{x}^+,p^+,l^+)\in \hat{\TF}((\hat{x},p,l),\hat{u},\hat{w})$, condition 3(b)  is also satisfied using the same reasoning with $(x^+,p^+,l^+)\in \TF((x,p,l),\hat{u},w)$. 
	Therefore, we conclude that $\mathcal{V}$ is a local $\varepsilon$-InitSOPSF from $T(\Sigma)$ to $T(\hat{\Sigma})$ .		
\end{proof}	
\begin{remark}\label{nonl-common}
	If $\Sigma$ admits a common $\delta$-ISS Lyapunov function satisfying Assumption \ref{ass2}, then function $\mathcal{V}$ in Theorem \ref{thm:2} reduces to $\mathcal{V}((x,p,l),(\hat{x},p,l))\Let V(x,\hat{x})$.
\end{remark}

Given the results of Theorems \ref{thm:3} and \ref{thm:2}, one can see that conditions \eqref{compoquaninit} and \eqref{secquantinit} may not hold at the same time. In the following subsection, we will discuss about the inherent property that the network should have such that one can design suitable quantization parameters to satisfy conditions \eqref{compoquaninit} and \eqref{secquantinit} simultaneously.

\subsection{Compositionality Result}
%
We raise the following assumption which provides a small-gain type condition so that one can verify whether the competing conditions \eqref{compoquaninit} and \eqref{secquantinit} can be satisfied simultaneously.

\begin{assumption}\label{assump}
	Consider network $\mathcal{I}(T(\Sigma_1),\ldots,T(\Sigma_N))$ induced by $N\in\N_{\ge1}$ transition systems~$T(\Sigma_i)$. Assume that each $T(\Sigma_i)$ and its finite abstraction $T(\hat \Sigma_i)$ admit a local $\varepsilon_i$-InitSOPSF $\mathcal{V}_i$ defined in \eqref{sm}, associated with functions and constants $\kappa_i$, $\alpha_i$, and $\rho_{i}$ that appeared in Theorem \ref{thm:2}. Define
	\begin{align}\label{gammad}
		\gamma_{ij}&\Let\left\{
		\begin{array}{lr}
			(1-\kappa_{i})^{-1}\rho_i\circ\alpha_{j}^{-1} &\mbox{if } j \in \mathcal{N}_i, \\
			0 &\mbox{otherwise},
		\end{array}\right.
	\end{align}for all $ i,j \in [1;N]$,
	and assume that functions $\gamma_{ij}$ defined in \eqref{gammad} satisfy
	\begin{align}\label{SGC}
		\gamma_{i_1i_2}\circ\gamma_{i_2i_3}\circ\cdots\circ\gamma_{i_{r-1}i_r}\circ\gamma_{i_ri_1}<\mathcal{I}_d,
	\end{align}
	$\forall(i_1,\ldots,i_r)\in\{1,\ldots,N\}^r$, where $r\in \{1,\ldots,N\}$.
\end{assumption}

Now, we show that, under the above small-gain assumption, one can always compositionally design local quantization parameters to satisfy conditions \eqref{compoquaninit} and \eqref{secquantinit} simultaneously.

\begin{theorem} \label{smallgain}
	Suppose that Assumption \ref{assump} holds. Then, 
	there always exist local quantization parameters $\eta_i$ and $\phi_{ij}$, $\forall i,j \in [1;N]$, as designed in Algorithm \ref{quantialgo}, such that \eqref{compoquaninit} and \eqref{secquantinit} can be satisfied simultaneously.
\end{theorem}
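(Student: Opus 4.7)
The plan is to reduce the simultaneous feasibility of \eqref{compoquaninit} and \eqref{secquantinit} to a purely algebraic system of strict inequalities in the design parameters $\varepsilon_i$, and then to solve that system by invoking the monotone small-gain theorem, with Assumption \ref{assump} as the key ingredient.

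First, I would observe that \eqref{compoquaninit} forces $\vartheta_i\ge\max_{j\in\mathcal{N}_i}\{\alpha_j^{-1}(\varepsilon_j)+\phi_{ij}\}$, whereas positivity of the right-hand side of \eqref{secquantinit} (a necessary condition for a valid $\eta_i>0$) requires $\rho_i(\vartheta_i)<(1-\kappa_i)\varepsilon_i$. Since $\rho_i\in\mathcal{K}_\infty$ is strictly increasing, plugging the smallest admissible $\vartheta_i$ into the latter inequality and taking $\phi_{ij}\to 0^+$ yields the necessary-and-sufficient condition
\[
  \gamma_{ij}(\varepsilon_j)<\varepsilon_i\qquad \forall i\in[1;N],\ \forall j\in\mathcal{N}_i,
\]
with $\gamma_{ij}$ as defined in \eqref{gammad}. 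Hence the whole theorem reduces to producing a componentwise positive vector $(\varepsilon_1,\ldots,\varepsilon_N)$ satisfying this strict system together with sufficiently small $\phi_{ij}>0$ that absorbs the slack.

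Second, I would invoke the standard nonlinear small-gain result for monotone $\mathcal{K}_\infty$ gain matrices (see e.g.\ R\"uffer, or Dashkovskiy--R\"uffer--Wirth): the cyclic condition \eqref{SGC} on every $r$-cycle of the gain digraph is exactly what is needed for the operator $\Gamma\colon\mathbb R_{\ge 0}^N\to\mathbb R_{\ge 0}^N$, $\Gamma(s)_i:=\max_{j\in\mathcal{N}_i}\gamma_{ij}(s_j)$, to admit a strictly positive vector $\varepsilon$ with $\Gamma(\varepsilon)<\varepsilon$ componentwise. Continuity of $\rho_i$ and of $\alpha_j^{-1}$ at the strictly positive point $\varepsilon$ then guarantees a positive margin
\[
  \delta_i\Let (1-\kappa_i)\varepsilon_i-\max_{j\in\mathcal{N}_i}\rho_i\bigl(\alpha_j^{-1}(\varepsilon_j)\bigr)>0,
\]
so each $\phi_{ij}>0$ can be chosen small enough that
$\rho_i\bigl(\max_{j\in\mathcal{N}_i}(\alpha_j^{-1}(\varepsilon_j)+\phi_{ij})\bigr)<(1-\kappa_i)\varepsilon_i$
still holds.

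With the $\varepsilon_i$ and $\phi_{ij}$ fixed as above, I would set $\vartheta_i:=\max_{j\in\mathcal{N}_i}\{\alpha_j^{-1}(\varepsilon_j)+\phi_{ij}\}$, which by construction satisfies \eqref{compoquaninit} (with equality attained in the max). Then the right-hand side of \eqref{secquantinit} is strictly positive by the preceding display, so any $\eta_i$ chosen in the nonempty interval $(0,\,\min\{\hat\gamma_i^{-1}((1-\kappa_i)\varepsilon_i-\rho_i(\vartheta_i)),\,\overline\alpha_i^{-1}(\varepsilon_i)\}]$ is admissible. Algorithm \ref{quantialgo} is precisely a sequential formalization of these choices. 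The main obstacle I anticipate is the careful invocation of the monotone small-gain theorem when the neighbor sets $\mathcal{N}_i$ make $\Gamma$ genuinely sparse: one has to argue that the cycle-based hypothesis \eqref{SGC} delivers not just a non-strict fixed point but a \emph{strictly} positive margin $\delta_i>0$ for every $i$, which is what underwrites the insertion of positive quantization slacks $\phi_{ij}$ and the resulting well-posedness of \eqref{secquantinit}.
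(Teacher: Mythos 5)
Your proposal is correct and follows the same overall strategy as the paper's proof: reduce the simultaneous feasibility of \eqref{compoquaninit} and \eqref{secquantinit} to the strict inequalities $\rho_i\bigl(\max_{j\in\mathcal{N}_i}\alpha_j^{-1}(\varepsilon_j)\bigr)<(1-\kappa_i)\varepsilon_i$, solve these via a small-gain argument, then set $\vartheta_i=\max_{j\in\mathcal{N}_i}\{\alpha_j^{-1}(\varepsilon_j)+\phi_{ij}\}$ with $\phi_{ij}$ small enough to absorb the slack, and read off a nonempty interval for $\eta_i$; this is exactly the chain \eqref{inequalityinscc}--\eqref{posiinequal} in the paper. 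The one substantive difference is which form of the small-gain theorem you invoke. You use the point form: existence of a single strictly positive vector $\varepsilon=(\varepsilon_1,\ldots,\varepsilon_N)$ with $\Gamma(\varepsilon)<\varepsilon$ componentwise. The paper instead uses the $\Omega$-path form of \cite[Theorem 5.2]{090746483}: condition \eqref{SGC} yields $\mathcal{K}_\infty$ functions $\sigma_i$ satisfying $\max_{j\in\mathcal{N}_i}\{\gamma_{ij}\circ\sigma_j\}<\sigma_i$ as in \eqref{gamcur}, i.e., a continuum of decay points $\sigma(r)$, $r>0$. This stronger form is what makes Algorithm \ref{quantialgo} work as stated: its input is an \emph{arbitrary} desired network precision $\varepsilon$, and the algorithm selects $r$ with $\max_i\sigma_i(r)=\varepsilon$ (possible since each $\sigma_i$ is continuous, unbounded, and vanishes at $0$) before setting $\varepsilon_i=\sigma_i(r)$. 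With only a fixed decay vector you establish the bare existence claim but cannot match a prescribed $\varepsilon$, and because the gains $\gamma_{ij}$ in \eqref{gammad} are genuinely nonlinear you cannot simply rescale: $\Gamma(t\varepsilon)<t\varepsilon$ may fail for $t<1$. This matters downstream, since the construction is ultimately meant to meet a given opacity precision $\hat\varepsilon=\alpha^{-1}(\varepsilon)\leq\frac{\delta}{2}$ in Corollary \ref{thm:InitSOP}. The repair is precisely the upgrade you anticipated in your closing remark: for max-type gain operators the cyclic condition \eqref{SGC} delivers the full $\Omega$-path, not merely one decay point, and with that substitution the remainder of your argument (the strict margin $\delta_i>0$ by continuity, the choice of $\phi_{ij}$, $\vartheta_i$, and $\eta_i$) coincides step by step with the paper's proof.
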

\begin{proof}
	First, let us note that the small-gain condition \eqref{SGC} implies that $\exists \sigma_i \in \mathcal{K}_{\infty}$ satisfying $\forall i \in [1;N]$,  
	\begin{align}\label{gamcur}
		&\max\limits_{j \in \mathcal{N}_i}\{\gamma_{ij}\circ\sigma_j\}<\sigma_i, 
	\end{align}see \cite[Theorem 5.2]{090746483}. Then, from \eqref{gammad}, we have $\forall i\in [1;N]$,
	\begin{align} \label{inequalityinscc}
		&\max\limits_{j \in \mathcal{N}_i}\{\gamma_{ij}\circ\sigma_j\}<\sigma_i\Longrightarrow \max\limits_{j \in \mathcal{N}_i}\{(1-\kappa_{i})^{-1}\rho_{i}\circ\alpha_{j}^{-1}\circ\sigma_j\}<\sigma_i  \Longrightarrow \rho_{i} \circ\max\limits_{j \in \mathcal{N}_i}\{\alpha_{j}^{-1}\circ\sigma_j\}< (1-\kappa_{i})\sigma_i.
	\end{align}Next, suppose that we are given a sequence of functions $\sigma_i \in \mathcal{K}_{\infty}$, $\forall i\in [1;N]$, satisfying \eqref{gamcur}. 
	Assume we are given any desired precision $\varepsilon$ as in Definition \ref{sfg}.	Let us set $\varepsilon_i=\sigma_i(r)$, $\forall i \in [1;N]$, where $r \in \mathbb{R}_{>0}$ is chosen such that $\max\limits_{i}\{\sigma_i(r)\} = \varepsilon$. Then, we choose internal input quantization paramters $\phi_{ij}$, $\forall i,j \in [1;N]$, such that 
	\begin{align} \label{interinputquan}
		\max\limits_{j \in \mathcal{N}_i}\{\phi_{ij}\} < \rho_{i}^{-1}((1-\kappa_{i})\varepsilon_i) - \max\limits_{j \in \mathcal{N}_i}\{\alpha_{j}^{-1}(\varepsilon_j)\}.
	\end{align}Now, by setting $\vartheta_i = \max\limits_{j \in \mathcal{N}_i}\{\alpha^{-1}_j(\varepsilon_j)+\phi_{ij}\}$, and combining \eqref{inequalityinscc} and \eqref{interinputquan}, one has $\forall i \in [1;N]$ 
	\begin{align}  \label{posiinequal}
		\rho_{i}(\vartheta_i) &	= \rho_{i}(\max\limits_{j \in \mathcal{N}_i}\{\alpha^{-1}_j(\varepsilon_j)+\phi_{ij}\}) \leq \rho_{i}(\max\limits_{j \in \mathcal{N}_i}\{\alpha^{-1}_j(\varepsilon_j)+\max\limits_{j \in \mathcal{N}_i}\{\phi_{ij}\})< (1-\kappa_{i})\varepsilon_i.
	\end{align}Thus, by \eqref{posiinequal}, given any pair of parameters $(\varepsilon_i, \vartheta_i)$, one can always find suitable local parameters $\eta_i$ to satisfy \eqref{secquantinit}. Additionally, the selection of $\vartheta_i = \max\limits_{j \in \mathcal{N}_i}\{\alpha^{-1}_j(\varepsilon_j)+\phi_{ij}\}$ ensures that \eqref{compoquaninit} is satisfied as well, which concludes the proof.
\end{proof}

\begin{algorithm}[h!]
	\DontPrintSemicolon
	
	\KwInput{The desired precision $\varepsilon \in \mathbb{R}_{>0}$; the simulation functions $\mathcal{V}_i$ equipped with functions $\kappa_i$, $\alpha_i$, $\rho_{i}$, $\hat{\gamma}_i$, and $\overline{\alpha}_i$, $\forall i \in [1;N]$; functions $\sigma_{_i}$, $\forall i \in [1;N]$, satisfying \eqref{gamcur}.}

	Choose $r \in \mathbb{R}_{>0}$ s.t. $\max\limits_{i\in [1;N]}\{\sigma_i(r)\} = \varepsilon$;
	
	Set $\varepsilon_i=\sigma_i(r)$, $\forall i \in [1;N]$;
	
	Design $\phi_{ij}\in \mathbb{R}_{>0}$ s.t. $\max\limits_{j \in \mathcal{N}_i}\{\phi_{ij}\} < \rho_{i}^{-1}((1-\kappa_{i})\varepsilon_i) - \max\limits_{j \in \mathcal{N}_i}\{\alpha_{j}^{-1}(\varepsilon_j)\},  \forall i,j  \in  [1; N]$; 
	
	Set $\vartheta_i = \max\limits_{j \in \mathcal{N}_i}\{\alpha^{-1}_j(\varepsilon_j)+\phi_{ij}\}$, $\forall i \in [1;N]$;
	
	Design $\eta_i\!\in\! \mathbb{R}_{>0}$ s.t. 
	$\eta_i \leq \min\{\hat{\gamma}_i^{-1}((1-\kappa_i)\varepsilon_i-\rho_i(\vartheta_i)),  \overline{\alpha}_i^{-1}(\varepsilon_i)\}$;  
	
	\KwOutput{Quantization parameters $\eta_i \in \mathbb{R}_{>0}$ and $\phi_{ij} \in \mathbb{R}_{>0}$, $\forall i \in [1;N]$.}					
	
	\caption{Compositional design of local quantization parameters $\eta_i \!\in\! \mathbb{R}_{>0}$ and $\phi_{ij} \!\in\! \mathbb{R}_{>0}$, $\!\forall i \!\in\!\! [1;N]$.} \label{quantialgo}
\end{algorithm}

\begin{figure}[ht!]
	\centering
	\includegraphics[width=0.8\textwidth]{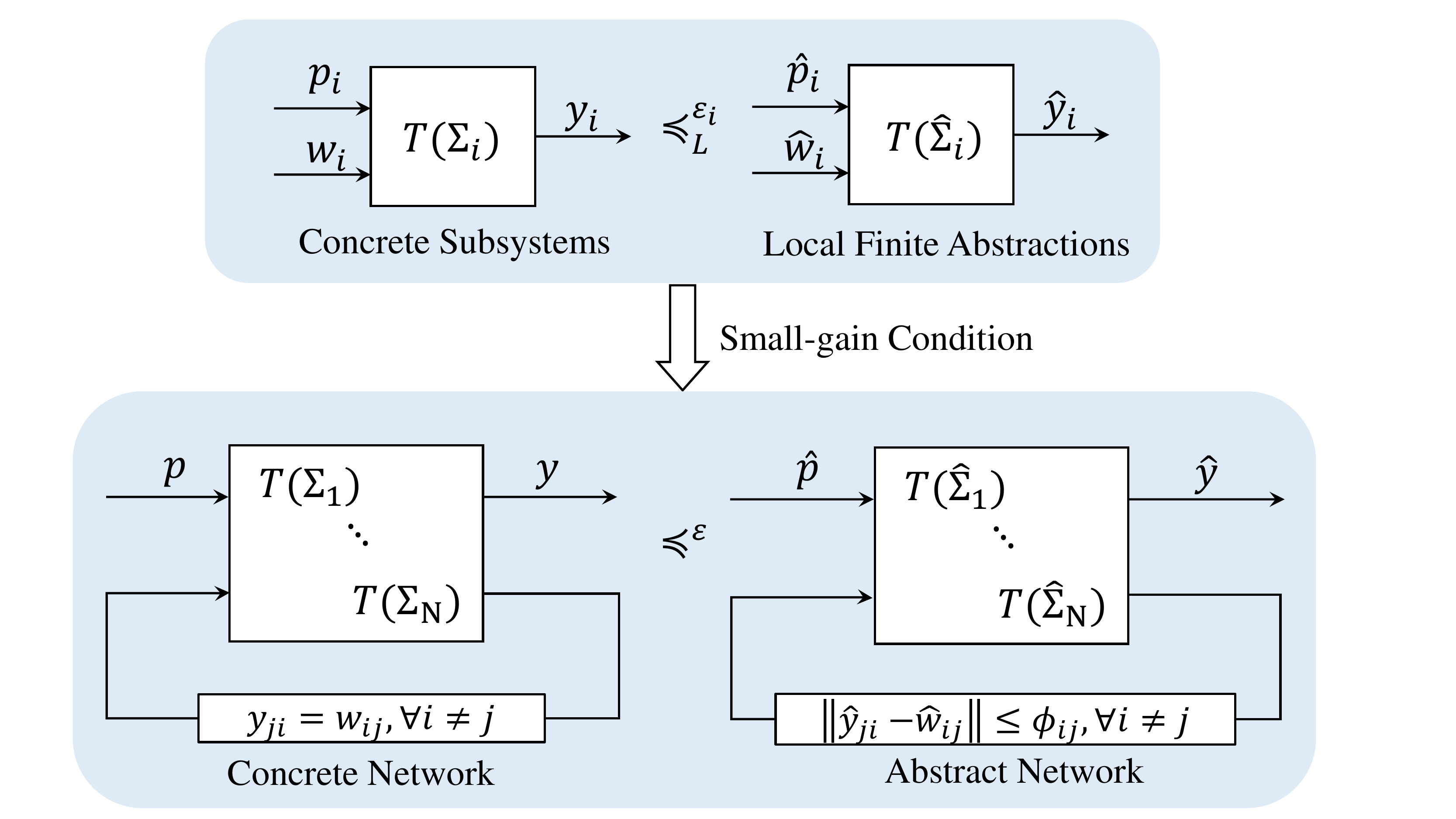}
	\caption{Compositionality result.}
	\label{compsitionality}
\end{figure}
\begin{remark}
	The compositionality result in Theorem \ref{smallgain} imposes a small-gain type condition on the concrete network of switched subsystems for the existence of proper compositional finite abstraction, as depicted in Figure~\ref{compsitionality}. In particular, under such small-gain type conditions, one can always find suitable local quantization parameters to construct local finite abstractions. The interconnection of the local finite abstractions can be used to serve as a finite abstraction for the concrete network satisfying the simulation relation ${T}(\Sigma) \preceq^{\varepsilon} T(\hat{\Sigma})$.
\end{remark}
\begin{remark}
	Let us provide a general guideline on the computation of $\mathcal{K}_{\infty}$ functions $\sigma_i$, $i\in[1;N]$, that are used in Theorem \ref{smallgain}: 
	$(i)$ in a general case when the network is consisting of $N\ge 1$ subsystems, functions $\sigma_i, i\in[1;N]$, can be constructed numerically by leveraging the algorithm introduced in \cite{Eaves} and the technique presented in \cite[Proposition 8.8]{090746483}, see \cite[Chapter 4]{Rufferp}; 
	$(ii)$ for the case of having two and three subsystems in the network, there have been some construction techniques proposed in \cite{JIANG} and  \cite[Section 9]{090746483}, respectively;  
	$(iii)$ when the gain functions appeared in \eqref{gammad} satisfy $\gamma_{ij}<\mathcal{I}_d$, $\forall i,j\in [1;N]$, then one can always choose $\sigma_i, i\in[1;N]$ to be identity functions.
\end{remark}

\section{Illustrative Example} \label{sec:example}

Here, we provide an illustrative example to show how one can leverage the proposed compositional approach to check approximate initial-state opacity of a network of switched systems based on its finite abstraction.

Consider a network of discrete-time switched systems $\Sigma=(\mathbb X,\mathbb X_0,\mathbb X_s,P,F,$ $\mathbb Y,h)$ as in Definition \ref{netsw}, consisting of $n$ subsystems $\Sigma_i$ each described by:
\begin{align}\label{exsm}
	\Sigma_i:\left\{
	\begin{array}[\relax]{rl}
		\mathbf{x}_i(k+1)&=a_{i{\mathsf p}_i(k)}\mathbf{x}_i(k)+ d_i\omega_i(k)+b_{i\mathsf{p}_i(k)},\\
		\mathbf{y}_i(k)&= c_i \mathbf{x}_i(k),
	\end{array}\right.
\end{align}
where $\mathsf p_i (k)\in P_i = \{1,2\}$, $\forall k \in \N$,  denotes the modes of each subsystem $\Sigma_i$. 
The other parameters are as the following: $a_{i1} = 0.05$, $a_{i2} = 0.1$, $b_{i1}=0.1$, $b_{i2}=0.15$, $d_i=0.05$, $c_i = [c_{i1};\dots;c_{in}]$ with $c_{i(i+1)} = 1$, $c_{ij} = 0$, $\forall i\in [1;n-1], \forall j \neq i+1$, $c_{n1}=c_{nn} = 1$, $c_{nj}= 0$, $\forall j \in [2;n-1]$. The internal inputs are subject to the constraints $\omega_1(k) = c_{n1}\mathbf{x}_n(k)$ and $\omega_i(k) = c_{(i-1)i}\mathbf{x}_{(i-1)}(k)$, $\forall i\in [2;n]$.
For each switched subsystem, the state set is $\mathbb X_i =\mathbb X_{0_i} = (0, 0.6)$, $\forall i \in [1;n]$, the secret set is $\mathbb{X}_{s_1} = (0,0.2]$, $\mathbb{X}_{s_2} = [0.4,0.6)$, $\mathbb{X}_{s_i} = (0, 0.6)$, $\forall i \in [3;n]$, 
the output set is $\mathbb Y_i=\prod_{j=1}^n  \mathbb Y_{ij}$ where
$\mathbb Y_{i(i+1)} =(0, 0.6)$, $\mathbb Y_{ii} =\mathbb Y_{ij} =\{0\}$, $\forall i \in [1;n-1]$, $\forall j \neq i+1$, $\mathbb Y_{nn} =\mathbb Y_{n1} = (0, 0.6)$, $\mathbb Y_{nj} =\{0\}$, $\forall j \in [2;n-1]$, and internal input set is $\mathbb{W}_1=\mathbb{Y}_{ni}$, $\mathbb{W}_i=\mathbb{Y}_{(i-1)i}$, $\forall i\in [2;n]$.
Intuitively, the output of the network is the external output of the last subsystem $\Sigma_n$. The interconnection topology of the network is depicted in Figure~\ref{net}.
\begin{figure}[h!]
	\centering 
	\includegraphics[scale=0.45]{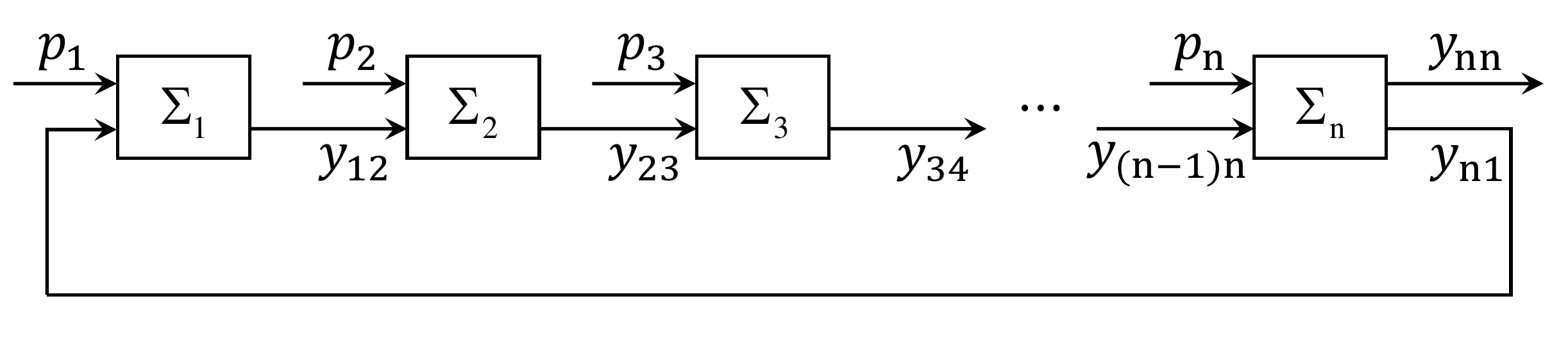}
	\caption{The interconnection topology of the network of discrete-time switched subsystems $\Sigma_i$.}
	\label{net}
\end{figure}

The main goal of this example is to check approximate initial-state opacity of the concrete network using its finite abstraction. 
Now, let us construct a finite abstraction of $\Sigma$ compositionally with accuracy $\hat \varepsilon = 0.25$ as defined in \eqref{er}, which preserves approximate initial-state opacity. We implement our compositional approach to achieve this goal.

Consider functions $V_{ip_i} = |x_i- \hat x_i|$, $\forall i \in [1;n]$. It can be readily verified that \eqref{e:SFC11} and \eqref{e:SFC22} are satisfied with 
$\underline{\alpha}_{ip_i}=\overline{\alpha}_{ip_i}=\mathcal{I}_d$, $\rho_{ip_i}=0.05$, $\forall p_i\in P_i$,  $\kappa_{i1}= a_{i1}=0.05$,  $\kappa_{i2}= a_{i2}=0.1$.
Condition \eqref{tinq} is satisfied with $\gamma_{ip_i}=\mathcal I_d$, $\forall p_i\in P_i$. Moreover, since $V_{ip_i}= V_{ip^+_i},\forall p_i,p^+_i\in P_i$, $V_i(x_i,\hat{x}_i)= |x_i-\hat x_i|$ is a common $\delta$-ISS Lyapunov function for subsystem $\Sigma_i$. 
Next, given functions $\kappa_i = 0.1$, $\rho_i=0.06\mathcal I_d$, $\alpha_i= \mathcal{I}_d$, $\hat{\gamma}_i=1.05\mathcal I_d$, $\overline{\alpha}_i=\mathcal{I}_d$ as appeared in Theorem \ref{thm:2}, we have $\gamma_{ij} < \mathcal{I}_d$ by \eqref{gammad}, $\forall i,j \in [1;n]$.
Hence, the small-gain condition \eqref{SGC} is satisfied. Then, by applying Theorem \ref{smallgain} and choosing functions $\sigma_i = \mathcal{I}_d$, $\forall i \in [1;n]$, such that \eqref{gamcur} holds, we obtain proper pairs of local parameters $(\varepsilon_i,\vartheta_i)= (0.25, 0.25)$ for all of the transition systems. 
Accordingly, we provide a suitable choice of local quantization parameters as $\eta_i=0.2$, $\forall i \in [1;n]$, such that inequality \eqref{secquantinit} for each transition system $T( {\Sigma}_i)$ is satisfied. 
Then, we construct local finite abstractions $T(\hat{\Sigma}_i)=(\hat{X}_i,\hat X_{0_i},\hat X_{s_i},\hat{U}_i,\hat{W}_i,\hat{\TF}_i,\hat{Y}_i,\hat{\op}_i)$ as in Definition \ref{smm1}, where:
\begin{align*}
	&\hat{X}_i= \hat{{X}}_{0_i} = \{0.2,0.4\}, \forall i \in [1;n],\\
	&\hat{X}_{s_i} = \left\{ 
	\begin{array}[\relax]{ll} 
		\{0.2\}, \quad  \quad \quad  \quad \quad  \quad&\mbox{if } i = 1\\
		\{0.4\}, \quad  \quad  \quad  \quad \quad  \quad&\mbox{if } i = 2\\
		\{0.2,0.4\}, \quad  \quad \quad  \quad \quad  \quad&\mbox{otherwise}
	\end{array}\right.\\
	&\hat Y_{i} =  \left\{ 
	\begin{array}[\relax]{ll} 
		\prod_{j=1}^{i} \{0\}\!\times\!\{0.2,0.4\}\!\times\!\!\prod_{j=i+2}^{n}\{0\}, \quad  \quad &\mbox{if } i \in [1;n\!-\!1]\\
		\{0.2,0.4\} \!\times\! \prod_{j=2}^{n-1}\{0\} \!\times\!\{0.2,0.4\}, \quad  \quad  &\mbox{otherwise}
	\end{array}\right.\\
	&\hat W_i=\{0.2,0.4\}, \forall i \in [1;n].
\end{align*}
Using the result in Theorem \ref{thm:2}, one can verify that $V_i(x_i,\hat{x}_i)= |x_i-\hat x_i|$ is a local $\varepsilon_i$-InitSOPSF from each $T(\Sigma_i)$ to its finite abstraction $T(\hat{\Sigma}_i)$. Furthermore, by the compositionality result in Theorem \ref{thm:3}, we obtain that ${V} = \max\limits_{i}\{V_i(x_i,\hat{x}_i) \} = \max\limits_{i}\{|x_i- \hat x_i|\}$ is an $\varepsilon$-InitSOPSF from $T(\Sigma)=\mathcal{I}(T(\Sigma_1),\ldots,$ $T(\Sigma_{n}))$ to $T(\hat{\Sigma})=\hat{\mathcal{I}}(T(\hat{\Sigma}_1),\ldots,T(\hat{\Sigma}_{n}))$ with  $\varepsilon = \max\limits_{i} \varepsilon_i = 0.25$.

\begin{figure}[ht!]
	\centering
	\begin{small}
		\begin{tikzpicture}[->,>=stealth',shorten >=1pt,auto,node distance=2.1cm, inner sep=1pt, initial text =,
		every state/.style={draw=black,fill=white,state/.style=state with output},
		accepting/.style={draw=black,thick,fill=red!80!green,text=white},bend angle=25]
		
		\node[] at (3.5,0.3) {$T(\hat \Sigma_1)$:};
		\node[] at (3.5,-1.8) {$T(\hat \Sigma_2)$:};
		\node[] at (3.5,-3.9) {$T(\hat \Sigma_3)$:};

		\node[state with output,initial] (A0)  at (5.5,0.3)          {$q_1$ \nodepart{lower} $0y0$};
		\node[state with output,accepting,initial right]         (B0) [right of=A0] {$q_2$ \nodepart{lower} $0Y0$};
		
		\node[state with output,accepting,initial] (C0)   [below of=A0]                 {$q_1$ \nodepart{lower} $00y$};
		\node[state with output,initial right]         (D0) [right of=C0] {$q_2$ \nodepart{lower} $00Y$};

		\node[state with output,initial] (E0)   [below of=C0]                 {$q_1$ \nodepart{lower} $y0y$};
		\node[state with output,initial right]         (F0) [right of=E0] {$q_2$ \nodepart{lower} $Y0Y$};

		\path (A0) edge [loop above]    node {(2,Y)} (A0)
		edge                    node {} (B0)
		(B0) edge      [loop above]    node {} (B0)
		
		(C0) edge [loop above]    node {(2,Y)} (C0)
		edge           node {} (D0)
		(D0) edge     [loop above]  node {} (D0)

		(E0) edge [loop above]    node {(2,Y)} (E0)
		edge           node {} (F0)
		(F0) edge     [loop above]  node {} (F0)
		
		;
		\end{tikzpicture}
		\caption{Local finite abstractions of transition systems.}
		\label{exautomata1}
	\end{small}
	\vspace{-0.5cm}
\end{figure}
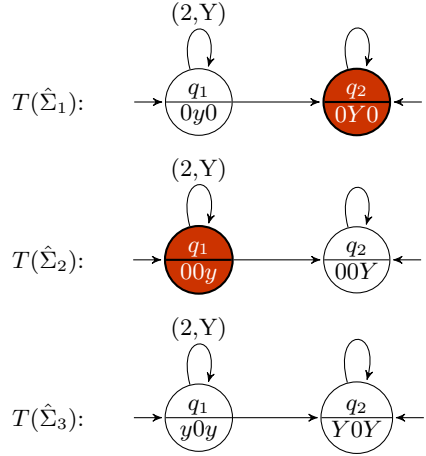

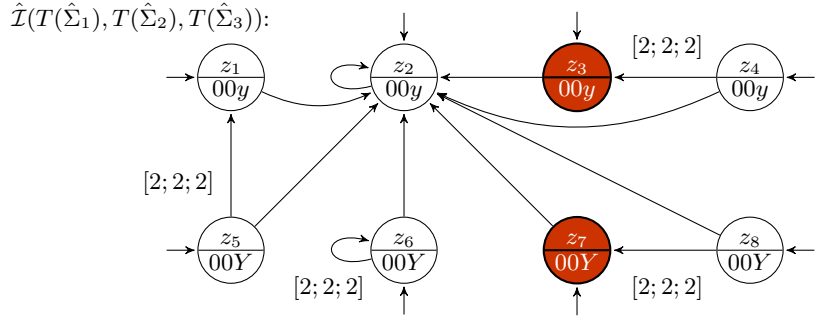
\begin{figure}[ht!]
	\centering
	\begin{small}
		
		\begin{tikzpicture}[->,>=stealth',shorten >=1pt,auto,node distance=2.3cm, inner sep=1pt, initial text =,
		every state/.style={draw=black,fill=white,state/.style=state with output},
		accepting/.style={draw=black,thick,fill=red!80!green,text=white},bend angle=25]
		\node[] at (2,0.8) {$\hat {\mathcal{I}}(T(\hat \Sigma_1),T(\hat \Sigma_2),T(\hat \Sigma_3))$:};
		\node[] at (9.0,0.4) {$[2;2;2]$};
		\node[] at (2.5,-1.4) {$[2;2;2]$};
		\node[] at (4.5,-2.8) {$[2;2;2]$};
		\node[] at (9.0,-2.8) {$[2;2;2]$};

		\node[state with output,initial above] (A)      at (5.5,0.0)              {$z_2$ \nodepart{lower} $00y$};
		\node[state with output, accepting,initial above]         (B) [right of=A] {$z_3$ \nodepart{lower} $00y$};
		\node[state with output,initial]         (D) [left of=A] {$z_1$ \nodepart{lower} $00y$};
		\node[state with output,initial right]         (E) [right of=B] {$z_4$ \nodepart{lower} $00y$};
		\node[state with output,initial below]         (H) [below of=A]       {$z_6$ \nodepart{lower} $00Y$};
		\node[state with output,initial]         (F) [left of=H] {$z_5$ \nodepart{lower} $00Y$};
		\node[state with output, accepting,initial below]         (G) [right of=H] {$z_7$ \nodepart{lower} $00Y$};
		\node[state with output,initial right]         (I) [right of=G]       {$z_8$ \nodepart{lower} $00Y$};

		\path (A) edge  [loop left]    node {} (A)
		(B) edge              node {} (A)
		(D) edge      [bend right]        node {} (A)
		(E) edge          node {} (B)
		edge    [bend left]      node {} (A)
		(F) edge          node {} (D)
		edge           node {} (A)
		
		(G) edge          node {} (A)
		(H) edge [loop left] node {} (H)
		edge    node {} (A)
		(I)  edge          node {} (G)
		edge    node {} (A);
		\end{tikzpicture}
	\end{small}
	\caption{Finite abstraction of a network of 3 transition systems.}
	\label{exautomata2}
	
\end{figure}

Now, let us verify approximate initial-state opacity for $T(\Sigma)$ using the network of finite abstractions $T(\hat{\Sigma})$. To do this, we first show an example of a network consisting of $3$ transition systems, as shown in Figures~\ref{exautomata1} and \ref{exautomata2}. The three automata in Figure~\ref{exautomata1}  represent the finite abstractions of the local transition systems, and the one in Figure~\ref{exautomata2} is the network of finite abstractions. Each circle is labeled by the state (top half) and the corresponding output (bottom half). Initial states are distinguished by being the target of a sourceless arrow. The states marked in red represent the secret states. The symbols on the edges show the switching signals $\mathsf{p}(k) \in \{1,2\}^{3}$ and internal inputs coming from other local transition systems. For simplicity of demonstration, we use symbols to represent the state and output vectors, where the states of local transition systems are denoted by $q_1=[0.4]$, $q_2=[0.2]$, the states of network of transition systems are denoted by
\begin{align*}
	z_1=[q_1;q_2;q_2], z_2=[q_2;q_2;q_2], z_3=[q_2;q_1;q_2], z_4=[q_1;q_1;q_2], \\
	z_5=[q_2;q_2;q_1], z_6=[q_2;q_1;q_1], z_7=[q_1;q_1;q_1], z_8=[q_1;q_2;q_1], 
\end{align*}
and the outputs of the corresponding states are represented as $y=0.2$ and $Y=0.4$ with the symbols like $00y = [0;0;0.2]$, $00Y= [0;0;0.4]$ representing concatenated output vectors. 
One can easily see that $\hat {\mathcal{I}}(T(\hat \Sigma_1),T(\hat \Sigma_2),T(\hat \Sigma_3))$ is $0$-approximate initial-state opaque, since for any run starting from any secret state, i.e. $z_3$ and $z_7$, there exists a run from a non-secret state, i.e. $z_1$ and $z_5$, such that the output trajectories are exactly the same. Essentially, one can verify that the abstract network holds this property regardless of the number of systems (i.e. n), due to the homogeneity of systems $\Sigma_i$ and the symmetry of the circular network topology. Thus, one can conclude that $T(\hat \Sigma) = \hat {\mathcal{I}}(T(\hat{\Sigma}_1),\dots,T(\hat{\Sigma}_n))$ is $0$-approximate initial-state opaque. Therefore, by Corollary \ref{thm:InitSOP}, we obtain that the original network $T(\Sigma) = \mathcal{I}(T(\Sigma_1),\dots,T(\Sigma_n))$ is $0.5$-approximate initial-state opaque.

\section{Conclusion} \label{sec:conclusion}
In this paper, we provided a compositional framework for the construction of opacity-preserving finite abstractions for networks of discrete-time switched systems. First, an approximate initial-state opacity-preserving simulation function (InitSOPSF) is defined to characterize the simulation relation between two networks, which facilitates the abstraction-based opacity verification process. Then we presented a compositional approach to construct finite abstractions locally for concrete subsystems under incremental input-to-state stability property. The interconnection of local finite abstractions forms an abstract network that mimics the behaviors of the concrete network while preserving opacity via the proposed InitSOPSF. Futhermore, we derived a small-gain type condition, under which one can guarantee the existence of proper quantization parameters for the construction of finite abstractions. For future work, we are interested in extending the compositionality results to cover more notions of opacity, e.g., current-state opacity \cite{saboori2007notions}, K-step opacity \cite{saboori2011verification}, and infinite-state opacity \cite{saboori2012verification}. Moreover, it would be an interesting direction to investigate opacity property for large-scale switched systems with unstable mode, and also for other classes of hybrid systems, e.g., stochastic systems and impulsive systems.

\bibliographystyle{IEEEtran}      
\bibliography{mybibfile} 	
\end{document}